\theoremstyle{definition}
\newtheorem{Def}{Definition}[section]
\newtheorem{thm}[Def]{Theorem}
\newtheorem{cor}[Def]{Corollary}
\newtheorem{eg}[Def]{Example}
\newtheorem{rem}[Def]{Remark}
\newcommand{\al}{\alpha}
\newcommand{\be}{\beta}
\newcommand{\del}{\delta}
\newcommand{\eps}{\epsilon}
\newcommand{\bfrac}[2]{\left(\frac{#1}{#2}\right)}
\newcommand{\od}[3][1]{
\ifnum #1=1\frac{\mathrm{d} #2}{\mathrm{d} #3}
\else\frac{\mathrm{d}^{#1} #2}{\mathrm{d} #3^{#1}}
\fi}
\newcommand{\pd}[3][1]{
\ifnum #1=1\frac{\partial #2}{\partial #3}
\else\frac{\partial^{#1} #2}{\partial #3^{#1}}
\fi}
\newcommand{\udis}{ultra-discrete}
\newcommand{\udisn}{ultra-discretization}
\newcommand{\ca}{cellular automaton}
\newcommand{\caa}{cellular automata}
\newcommand{\bbs}{box and ball system}
\newcommand{\lv}{Lotka-Volterra}
\newcommand{\tail}{\mathrm{tail}}
\newcommand{\head}{\mathrm{head}}
\newcommand{\floor}[1]{\left\lfloor{#1}\right\rfloor}
\begin{document}

\title
[A delay analogue of the BBS arising from the delay discrete LV equation]
{A delay analogue of the \bbs\ arising from the \udisn\ of the delay discrete \lv\ equation}
\author{Kenta Nakata$^1$, Kanta Negishi$^1$, Hiroshi Matsuoka$^1$ and Ken-ichi Maruno$^2$}
\address{$^1$~Department of Pure and Applied Mathematics, School of Fundamental Science and Engineering, Waseda University, 3-4-1 Okubo, Shinjuku-ku, Tokyo 169-8555, Japan}
\address{$^2$~Department of Applied Mathematics, Faculty of Science and Engineering, Waseda University, 3-4-1 Okubo, Shinjuku-ku, Tokyo 169-8555, Japan}
\ead{kennakaxx@akane.waseda.jp n.kanta@akane.waseda.jp matsuoka8@akane.waseda.jp kmaruno@waseda.jp}
\vspace{10pt}
\begin{indented}
\item[]{\today}
\end{indented}
\begin{abstract}
    A delay analogue of the \bbs\ (BBS) is presented.
    This new soliton \ca\ is constructed by the \udisn\ of the delay discrete \lv\ equation, which is an integrable delay analogue of the discrete \lv\ equation.
    Soliton patterns generated by this delay BBS are classified into normal solitons and abnormal solitons.
    Normal solitons have a clear relationship to the solitons of the BBS with $K$ kinds of balls.
    On the other hand, abnormal solitons show various types of novel soliton patterns, which have not been observed in almost all known BBSs.
    We obtain them by numerical experiments, and then construct $\tau$-functions of them analytically in $1$-soliton cases.
\end{abstract}
\noindent{\it Keywords\/}:\ \ \bbs s, \udisn, delay discrete soliton equations


\begin{section}{Introduction}
\label{sec_intro}

The \bbs\ (BBS) proposed by Takahashi and Satsuma~\cite{Takahashi1,Takahashi2} is one of the most fundamental soliton \caa.
From the viewpoint of integrable systems, the BBS has a remarkable relationship to soliton equations.
Tokihiro \textit{et al}~\cite{Tokihiro3} applied a kind of limiting procedure called \udisn\ to the discrete \lv\ (LV) equation.
Then they showed this procedure yields the \udis\ LV equation, and it is directly connected to the time evolution equation of the BBS.
After this discovery, various extended versions have been constructed and studied~\cite{Takahashi3,Tokihiro1,Hikami,Tokihiro2,Hatayama,Yura}.

Now we describe the detail of the \udisn\ of the LV equation as follows.
First, applying the following transformation and limit
\begin{equation*}
    v_{n}^{t}=\exp\bfrac{V_{n}^{t}}{\eps}\,,\qquad
    \del=\exp\bfrac{-1}{\eps}\,,\qquad
    \eps\to0\,,\qquad
    \left\{
    \begin{array}{cc}
        n+t&\to t\\
        t&\to n
    \end{array}
    \right.
\end{equation*}
to the discrete LV equation~\cite{Hirota1}
\begin{equation}
    \label{dislv}
    \frac{v_{n}^{t+1}}{v_{n}^{t}}
    =\frac{1+\del v_{n+1}^{t}}{1+\del v_{n-1}^{t+1}}\,,
\end{equation}
we obtain the \udis\ LV equation
\begin{equation}
    V_{n+1}^{t+1}-V_{n}^{t}
    =\max(V_{n}^{t+1}-1,0)-\max(V_{n+1}^{t}-1,0)\,.
\end{equation}
Then introducing the dependent variable $U_{n}^{t}$ by
\begin{equation*}
    V_{n}^{t}
    =\sum_{j=-\infty}^{n+1}U_{j}^{t}-\sum_{j=-\infty}^{n}U_{j}^{t+1}\,,
\end{equation*}
we obtain the time evolution equation of the BBS
\begin{equation}
    \label{bbs}
    U_{n}^{t+1}
    =\min\left(1-U_{n}^{t},
     \sum_{j=-\infty}^{n-1}U_{j}^{t}-\sum_{j=-\infty}^{n-1}U_{j}^{t+1}\right)\,.
\end{equation}
Note that we can also obtain the $N$-soliton solution of the BBS (\ref{bbs}) by applying the \udisn\ to the $N$-soliton solution of the discrete LV equation (\ref{dislv}).

The integrability of delay differential equations has been actively studied since the 1990s~\cite{Quispel,Levi,Gram,Ramani,Joshi1,Joshi2,Carstea,Viallet,Halburd,Berntson,Stokes,Sekiguchi}.
In our recent study~\cite{Tsunematsu,Nakata1}, some integrable delay analogues of soliton equations were proposed, such as a delay discrete LV equation
\begin{equation}
    \label{dlydislv_nl}
    \frac{v_{n}^{t+1+\mu}v_{n-1}^{t}}{v_{n}^{t+\mu}v_{n-1}^{t+1}}
    =\frac{(1+\del v_{n+1}^{t+\mu})(1+\del v_{n-2}^{t+1})}{(1+\del v_{n}^{t+\mu})(1+\del v_{n-1}^{t+1})}\,,
\end{equation}
where the constant $\mu$ is considered as the delay parameter.
The delay discrete LV equation (\ref{dlydislv_nl}) has an $N$-soliton solution similar to the discrete LV equation (\ref{dislv}), and reduces to the delay LV equation
\begin{equation*}
    \frac{\mathrm{d}}{\mathrm{d}t}\log\frac{u_{n}(t+\tau)}{u_{n-1}(t-\tau)}
    =u_{n+1}(t+\tau)-u_{n}(t+\tau)-u_{n-1}(t-\tau)+u_{n-2}(t-\tau)
\end{equation*}
in the continuum limit $\del\to0\,,\ m\del=t\,,\ \mu\del=2\tau$.

The main purpose of this paper is to apply the \udisn\ to the delay discrete LV equation (\ref{dlydislv_nl}),
to construct a delay analogue of the BBS, and then to visualize and analyze an effect of the delay.

This paper is organized as follows.
In section \ref{sec_cons}, we derive the delay \udis\ LV equation, and transform it into a delay analogue of the BBS.
In section \ref{sec_obs}, several types of soliton interactions of this delay BBS are presented.
For the time evolution of the delay BBS, we need to set multiple time initial states, and this feature realizes to generate various soliton patterns.
For example, we can observe solitons whose speeds change periodically.
We define such solitons as normal solitons, and others as abnormal solitons.
Soliton patterns shown in abnormal solitons have not appeared in almost any type of BBS.
We construct exact $\tau$-functions of them in $1$-soliton cases.
In section \ref{sec_rule}, the rule for the time evolution of the delay BBS and soliton interactions are discussed.
Because the delay BBS contains various types of solitons, it is necessary to consider the rules for each type of soliton individually.
In this paper, we describe the rules for the case of normal solitons by only elementary operations of boxes and balls.
In section \ref{sec_rel}, we discuss the relationship between the delay BBS and an already known BBS.
If solitons are normal and a delay parameter is a particular value, the delay BBS is essentially equivalent to the BBS with $K$ kinds of balls (and box-capacity $1$) introduced in~\cite{Takahashi2,Tokihiro1}.
Section \ref{sec_con} is devoted to conclusions.

\end{section}

\begin{section}{Construction of a delay \bbs}
\label{sec_cons}

In this section, we derive a delay analogue of the BBS by the \udisn\ of the delay discrete LV equation.
This is inspired by a derivation of the BBS from the \udis\ LV equation~\cite{Tokihiro3}.

We start from the discrete KP equation~\cite{Hirota2,Miwa}:
\begin{eqnarray}
    \label{diskp}
     (b-c)f_{n+1,t,k}f_{n,t+1,k+1}
    +(c-a)f_{n,t+1,k}f_{n+1,t,k+1}\nonumber\\
    \hspace{20mm} +(a-b)f_{n,t,k+1}f_{n+1,t+1,k}
    =0\,.
\end{eqnarray}
We showed in~\cite{Nakata1} that the delay discrete LV equation (\ref{dlydislv_nl}) is obtained by a reduction of the discrete KP equation.
Now, instead of equation (\ref{dlydislv_nl}), we introduce a generalized delay discrete LV equation.
Let us apply the following reduction to equation (\ref{diskp}): 
\begin{equation}
    \label{reduction_lv}
    f_{n,t,k+1}=f_{n+2+\nu,t+\mu,k}\,,
\end{equation}
where the $\mu$ and $\nu$ are fixed integers considered as the delays.
Setting $a=1,\ b=0,\ c=1+\del$ and using (\ref{reduction_lv}), we obtain the discrete bilinear equation:
\begin{equation}
    \label{dlydishlv}
    (1+\del)f_{n+\nu}^{t+1+\mu}f_{n-1}^{t}
    -\del f_{n+1+\nu}^{t+\mu}f_{n-2}^{t+1}
    -f_{n+\nu}^{t+\mu}f_{n-1}^{t+1}
    =0\,,
\end{equation}
where $f_{n}^{t}$ denotes $f_{n,t,0}$.

Next, we apply the dependent variable transformation
\begin{equation*}
    v_{n}^{t}
    =\frac{f_{n+1+\nu}^{t+\mu}f_{n-2}^{t+1}}{f_{n+\nu}^{t+\mu}f_{n-1}^{t+1}}
\end{equation*}
to bilinear equation (\ref{dlydishlv}).
Consequently, we obtain the nonlinear equation
\begin{equation}
    \label{dlydishlv_nl}
    \frac{v_{n+\nu}^{t+1+\mu}v_{n-1}^{t}}{v_{n+\nu}^{t+\mu}v_{n-1}^{t+1}}
    =\frac{(1+\del v_{n+1+\nu}^{t+\mu})(1+\del v_{n-2}^{t+1})}{(1+\del v_{n+\nu}^{t+\mu})(1+\del v_{n-1}^{t+1})}\,.
\end{equation}
In the case of $\nu=0$, we can check that equation (\ref{dlydishlv_nl}) becomes the delay discrete LV equation (\ref{dlydislv_nl}).
In addition, if $\mu=\nu=0$, equation (\ref{dlydishlv_nl}) becomes a division of the discrete LV equation
\begin{equation*}
    \frac{v_{n}^{t+1}}{v_{n}^{t}}
    =\frac{1+\del v_{n+1}^{t}}{1+\del v_{n-1}^{t+1}}
\end{equation*}
and the space-shifted one
\begin{equation*}
    \frac{v_{n-1}^{t+1}}{v_{n-1}^{t}}
    =\frac{1+\del v_{n}^{t}}{1+\del v_{n-2}^{t+1}}\,.
\end{equation*}

Now, let us consider the \udisn\ of a generalized delay discrete LV equation (\ref{dlydishlv_nl}).
Inspired by the derivation of the \udis\ LV equation in section \ref{sec_intro}, we first apply the following variable transformation to equation (\ref{dlydishlv_nl}):
\begin{equation*}
    \left\{
    \begin{array}{cc}
        n+t&\to t\\
        t&\to n
    \end{array}
    \right.\,,\qquad
    \left\{
    \begin{array}{cc}
        \nu+\mu&= \al\\
        \mu&= \be
    \end{array}
    \right.\,.
\end{equation*}
As a result, equations (\ref{dlydishlv}) and (\ref{dlydishlv_nl}) are transformed into
\begin{equation}
    \label{dlydishlv2}
    (1+\del)f_{n+1+\be}^{t+1+\al}f_{n}^{t-1}
    -\del f_{n+\be}^{t+1+\al}f_{n+1}^{t-1}
    -f_{n+\be}^{t+\al}f_{n+1}^{t}
    =0\,,
\end{equation}
\begin{equation}
    \label{dlydishlv_nl2}
    \frac{v_{n+1+\be}^{t+1+\al}v_{n}^{t-1}}{v_{n+\be}^{t+\al}v_{n+1}^{t}}
    =\frac{(1+\del v_{n+\be}^{t+1+\al})(1+\del v_{n+1}^{t-1})}{(1+\del v_{n+\be}^{t+\al})(1+\del v_{n+1}^{t})}\,.
\end{equation}
Then, we apply the \udisn\ to equation (\ref{dlydishlv_nl2}):
\begin{equation*}
    f_{n}^{t}=\exp\bfrac{F_{n}^{t}}{\eps}\,,\quad
    v_{n}^{t}=\exp\bfrac{V_{n}^{t}}{\eps}\,,\quad
    \del=\exp\bfrac{-1}{\eps}\,,\quad
    \eps\to0\,.
\end{equation*}
Using the formula
\begin{equation*}
    \lim_{\eps\to+0}\eps\log\left(e^{A/\eps}+e^{B/\eps}\right)
    =\max(A,B)
\end{equation*}
for real numbers $A,B$, we obtain the following \udis\ equations
\begin{equation}
    \label{dlyudishlv_bl}
    F_{n+1+\be}^{t+1+\al}+F_{n}^{t-1}
    =\max(F_{n+\be}^{t+1+\al}+F_{n+1}^{t-1}-1,
    F_{n+\be}^{t+\al}+F_{n+1}^{t})\,,
\end{equation}
\begin{eqnarray}
    \label{dlyudishlv}
    &&V_{n+1+\be}^{t+1+\al}
    -V_{n+\be}^{t+\al}
    -V_{n+1}^{t}
    +V_{n}^{t-1}\nonumber\\
    &=&\max\left(V_{n+\be}^{t+1+\al}-1,0\right)
    +\max\left(V_{n+1}^{t-1}-1,0\right)\nonumber\\
    &&-\max\left(V_{n+\be}^{t+\al}-1,0\right)
    -\max\left(V_{n+1}^{t}-1,0\right)\,.
\end{eqnarray}
\begin{equation}
    \label{dlyudishlv_trans}
    V_{n}^{t}
    =F_{n+\be}^{t+1+\al}-F_{n+\be}^{t+\al}-F_{n+1}^{t}+F_{n+1}^{t-1}\,.
\end{equation}
Equation (\ref{dlyudishlv}) is considered as the delay \udis\ LV equation.

We introduce a new dependent variable $U_{n}^{t}$:
\begin{equation}
    \label{trans_lv1}
    V_{n}^{t}
    =\sum_{j=-\infty}^{n+1} U_{j}^{t}
    -\sum_{j=-\infty}^{n+\be} U_{j}^{t+1+\al}
    =(T_1-T_2)\left(\sum_{j=-\infty}^{n} U_{j}^{t}\right)\,,
\end{equation}
where the linear operators $T_1$ and $T_2$ are defined by
\begin{equation*}
    T_1(g_{n}^{t})=g_{n+1}^{t}\,,\qquad
    T_2(g_{n}^{t})=g_{n+\be}^{t+1+\al}
\end{equation*}
for a function $g_{n}^{t}$.
Using (\ref{trans_lv1}), equation (\ref{dlyudishlv}) can be rewritten as
\begin{eqnarray*}
    &&(T_1-T_2)\left(\sum_{j=-\infty}^{n+1+\be} U_{j}^{t+1+\al}
    -\sum_{j=-\infty}^{n+\be} U_{j}^{t+\al}
    -\sum_{j=-\infty}^{n+1} U_{j}^{t}
    +\sum_{j=-\infty}^{n} U_{j}^{t-1}\right)\\
    &=&(T_1-T_2)\left(-\max\left(V_{n}^{t}-1,0\right)
    +\max\left(V_{n}^{t-1}-1,0\right)\right)\,.
\end{eqnarray*}
Thus we can formally obtain the equation
\begin{eqnarray*}
    &&\sum_{j=-\infty}^{n+1+\be} U_{j}^{t+1+\al}
    -\sum_{j=-\infty}^{n+\be} U_{j}^{t+\al}
    -\sum_{j=-\infty}^{n+1} U_{j}^{t}
    +\sum_{j=-\infty}^{n} U_{j}^{t-1}\\
    &=&-\max\left(V_{n}^{t}-1,0\right)
    +\max\left(V_{n}^{t-1}-1,0\right)\,.
\end{eqnarray*}
Using (\ref{trans_lv1}) again, we finally obtain the following equation:
\begin{eqnarray}
    \label{dlybbs_lv1}
    U_{n}^{t+\al+2}&=&U_{n-\be}^{t+1}
    -\min\left(1-U_{n-\be}^{t},\sum_{j=-\infty}^{n-1-\be} U_{j}^{t} - \sum_{j=-\infty}^{n-1} U_{j}^{t+\al+1}\right)\nonumber\\
    &&+\min\left(1-U_{n-\be}^{t+1},\sum_{j=-\infty}^{n-1-\be} U_{j}^{t+1} - \sum_{j=-\infty}^{n-1} U_{j}^{t+\al+2}\right)\,.
\end{eqnarray}
We define equation (\ref{dlybbs_lv1}) as a delay analogue of the BBS.
Equation (\ref{dlyudishlv_bl}) is its bilinear form.
If $\al=\be=0$, equation (\ref{dlybbs_lv1}) becomes a subtraction of the BBS (\ref{bbs}) and its time-shifted version
\begin{equation*}
    \fl
    U_{n}^{t+1}
    =\min\left(1-U_{n}^{t},\sum_{j=-\infty}^{n-1} U_{j}^{t} - \sum_{j=-\infty}^{n-1} U_{j}^{t+1}\right)\,,\quad
    U_{n}^{t+2}
    =\min\left(1-U_{n}^{t+1},\sum_{j=-\infty}^{n-1} U_{j}^{t+1} - \sum_{j=-\infty}^{n-1} U_{j}^{t+2}\right)\,.
\end{equation*}

\begin{rem}
In this paper, we assume the relation
\begin{equation}
    \label{dlybbs_trans}
    F_{n}^{t}=\sum_{i=t+1}^{\infty}\sum_{j=-\infty}^{n}U_{j}^{i}\,,
\end{equation}
which leads to
\begin{eqnarray*}
    \fl
    F_{n}^{t-1}-F_{n}^{t}=\sum_{j=-\infty}^{n}U_{j}^{t}\,,\qquad
    F_{n}^{t}-F_{n-1}^{t}=\sum_{i=t+1}^{\infty}U_{n}^{i}\,,\qquad
    U_{n}^{t} = -F_{n}^{t}+F_{n-1}^{t}+F_{n}^{t-1}-F_{n-1}^{t-1}\,.
\end{eqnarray*}
This relation (\ref{dlybbs_trans}) is consistent with (\ref{dlyudishlv_trans}) and (\ref{trans_lv1}).
Instituting (\ref{dlybbs_trans}) into the bilinear equation (\ref{dlyudishlv_bl}), we can obtain the delay BBS (\ref{dlybbs_lv1}) without formal operations.
\end{rem}
\begin{rem}
    For the convergence of equation (\ref{dlybbs_lv1}), we impose the boundary condition that $\{n:U_{n}^{t}\neq0\}$ has a lower bound for all $t$.
    When considering the bilinear equation (\ref{dlyudishlv_bl}) and the transformation (\ref{dlybbs_trans}), we add the further condition that  $\{t:U_{n}^{t}\neq0\}$ has an upper bound for all $n$.
\end{rem}

\end{section}

\begin{section}{Observation of soliton patterns of a delay \bbs}
\label{sec_obs}

In this section, we present several soliton interactions of the delay BBS (\ref{dlybbs_lv1}) as numerical experiments and discuss their features.
We note that the delay BBS includes the delay parameter $\al$, and determines the state of time $t+\al+2$ by using the states of times $t,\ t+1,\ t+\al+1$.
Therefore, initial states of times $0$ to $\al+1$ are required for computing the time evolution.

\begin{subsection}{Normal solitons}

Setting up various initial states at multiple times, we can generate various types of solitons.
First, we consider a class of initial states which generates simple soliton patterns.

\begin{Def}
    \label{def_normal}
    We call an initial state \textit{normal} if it satisfies the following conditions.
    \begin{itemize}
        \item [(A)] At each time $0,\ldots,\al+1$, there are $N$ queues that consist of $1$'s:
        More precisely, for all $i=0,\ldots,\al+1$, there exist integers $\tail_i(j),\head_i(j)\ \ (j=1,\ldots,N)$ such that
        \[
        \fl\hspace{20mm}\tail_i(1)\leq\head_i(1)<
        \tail_i(2)\leq\head_i(2)<\ldots<
        \tail_i(N)\leq\head_i(N)\,,
        \]
        and
        \[
        \left\{
        \begin{array}{cl}
            U_{n}^{i}=1&\mbox{if $\ \ \tail_i(j)\leq n\leq\head_i(j)\ \ $ for a $\ \ j=1,\ldots,N$}\\
            U_{n}^{i}=0&\mbox{otherwise}
        \end{array}
        \right.\,.
        \]
        \item [(B)] The tails of the queues are attached to the heads of the previous time queues:
        More precisely, for all $i=1,\ldots,\al+1$ and $j=1,\ldots,N$, it satisfies
        \[\tail_{i}(j)=\head_{i-1}(j)+1\,.\]
        \item [(C)] The lengths of the queues at time $0$ are the same as the lengths of the queues at time $\al+1$:
        More precisely, for all $j=1,\ldots,N$, it satisfies
        \[\head_{0}(j)-\tail_{0}(j)=\head_{\al+1}(j)-\tail_{\al+1}(j)\,.\]
        \item [(D)] The overlaps of the queues at times $0$ and $\al+1$ are no more than $\be$:
        More precisely, for all $j=2,\ldots,N$, it satisfies
        \[\head_{\al+1}(j-1)-\tail_{0}(j)\leq\be-1\,.\]
    \end{itemize}
    Note that if $N=1$, condition (D) is not needed.
\end{Def}

Figure \ref{fig_normal} shows an example of a normal initial state for $\al=2,\ \be=2,\ \ N=2$.
\begin{figure}
    \centering
    \includegraphics
    {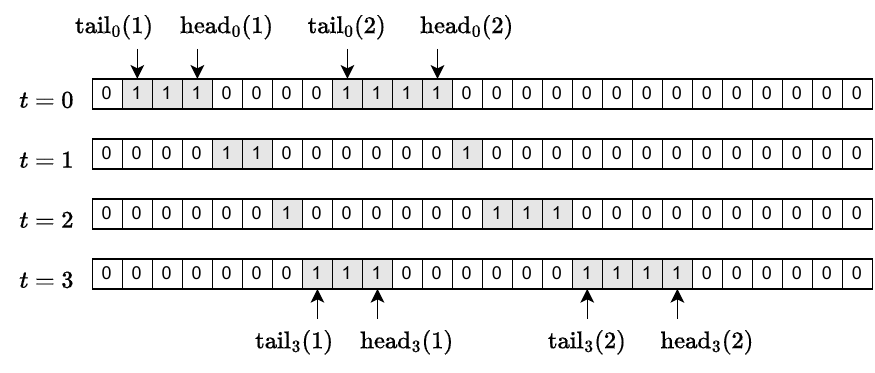}
    \caption{\label{fig_normal}
    An example of a normal initial state for $\al=2,\ \be=2,\ N=2$.}
\end{figure}

Now, let us show several soliton interactions of the delay BBS by using the time evolution equation (\ref{dlybbs_lv1}).
Figures \ref{fig_nml_evo10}, \ref{fig_nml_evo11}, \ref{fig_nml_evo20} and \ref{fig_nml_evo22} show examples of soliton patterns obtained by setting normal initial states.
From these figures, we can observe the speeds of these solitons change with the time period $\al+1$, where speed is defined as the distance travelled in one time.
For example, one of the solitons in figure \ref{fig_nml_evo10} has a two-periodic speed of $5,2,5,2,\cdots$, thus we denote it by $(5,2)$.
Similarly, we can denote the 2-soliton before interaction in figure \ref{fig_nml_evo20} by $(4,3,1)$ and $(1,3,2)$.

According to figures \ref{fig_nml_evo10} and \ref{fig_nml_evo11}, $(5,2)$ and $(1,3)$ deform into $(3,4)$ and $(3,1)$ after interaction.
Similarly, $(4,3,1)$ and $(1,3,2)$ in figures \ref{fig_nml_evo20} and \ref{fig_nml_evo22} deform into $(1,5,2)$ and $(4,1,1)$ after interaction.
The periodicity of speeds and the deforming after interaction are considered to be an effect of the delay.

Another consideration can be derived from these figures.
As we mentioned, $(5,2)$ and $(1,3)$ deform into $(3,4)$ and $(3,1)$ in figure \ref{fig_nml_evo10}, thus we can observe that the average speed of each soliton is conserved ($7/2=(5+2)/2=(3+4)/2$ and $2=(1+3)/2=(3+1)/2$).
The same phenomenon is observed in figures \ref{fig_nml_evo11}, \ref{fig_nml_evo20} and \ref{fig_nml_evo22}.
We discuss this conservation law in section \ref{sec_rule}.

Normal initial states generate soliton patterns commonly observed in known BBSs.
Thus we call solitons obtained from normal initial states \textit{normal solitons}.
A relationship between normal solitons and the BBS with $K$ kinds of balls is discussed in section \ref{sec_rel}.

\begin{figure}
    \centering
    \includegraphics[width=15cm]
    {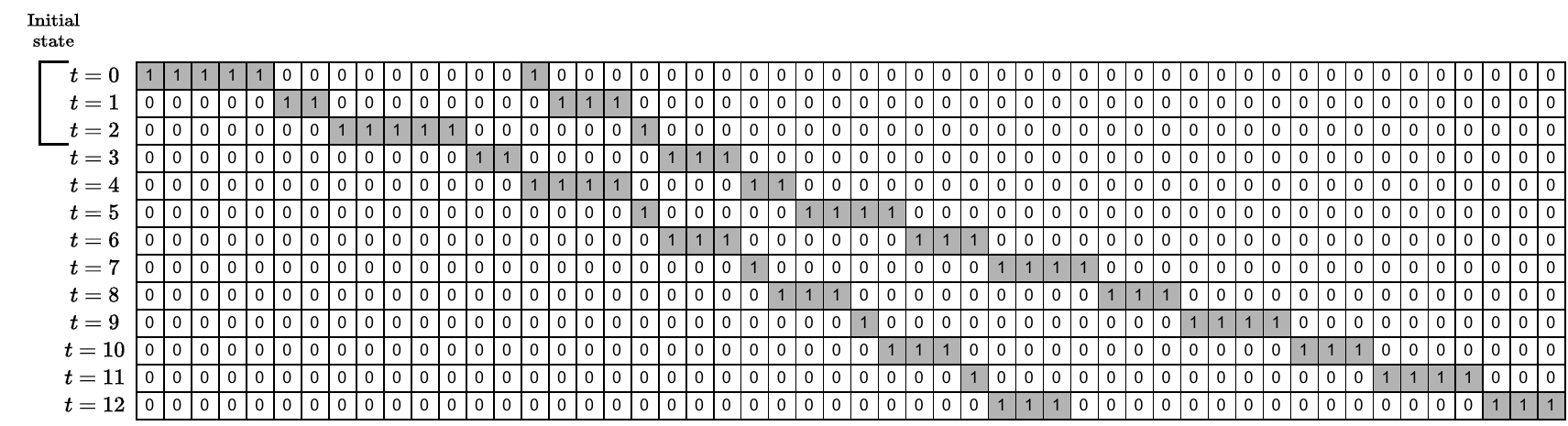}
    \caption{\label{fig_nml_evo10}
    Solitons obtained by normal initial states ($\al=1,\ \be=0,\ N=2$).}
\end{figure}
\begin{figure}
    \centering
    \includegraphics[width=15cm]
    {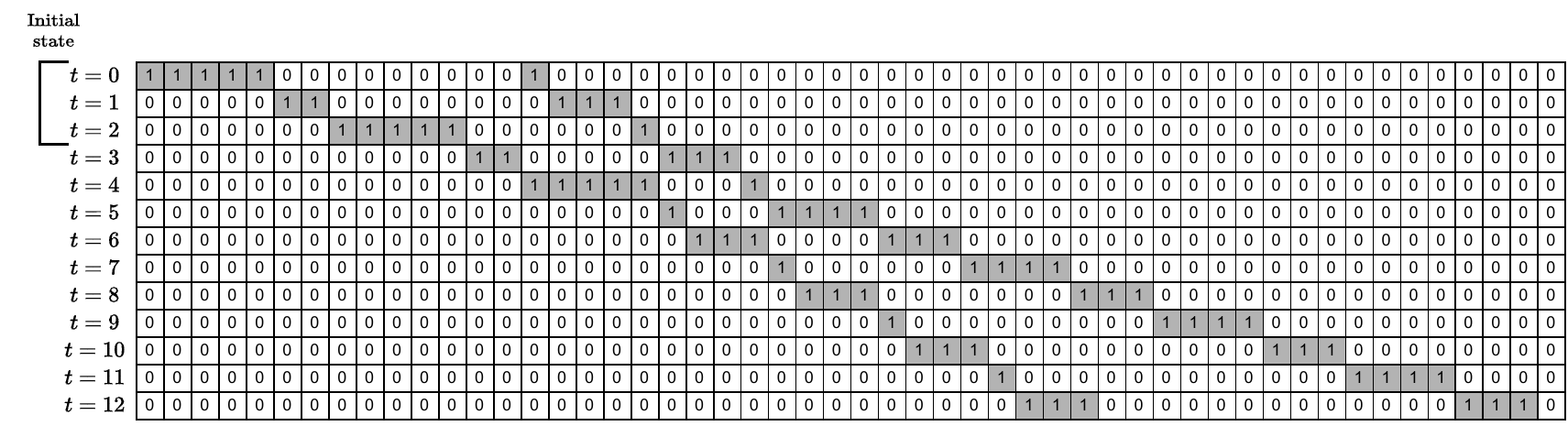}
    \caption{\label{fig_nml_evo11}
    Solitons obtained by normal initial states ($\al=1,\ \be=1,\ N=2$).}
\end{figure}
\begin{figure}
    \centering
    \includegraphics[width=15cm]
    {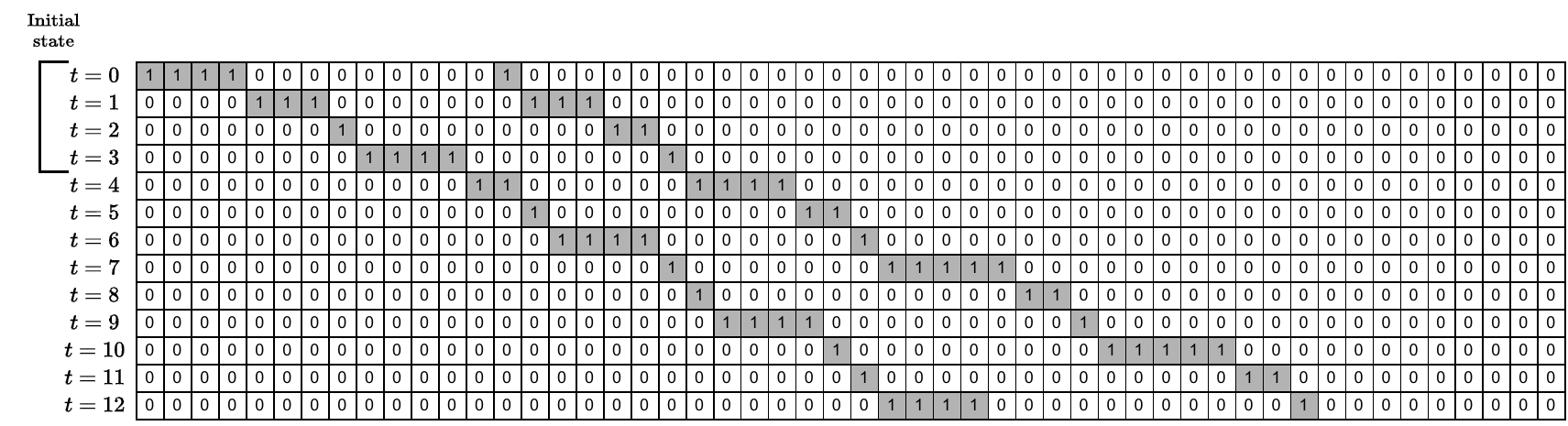}
    \caption{\label{fig_nml_evo20}
    Solitons obtained by normal initial states ($\al=2,\ \be=0,\ N=2$).}
\end{figure}
\begin{figure}
    \centering
    \includegraphics[width=15cm]
    {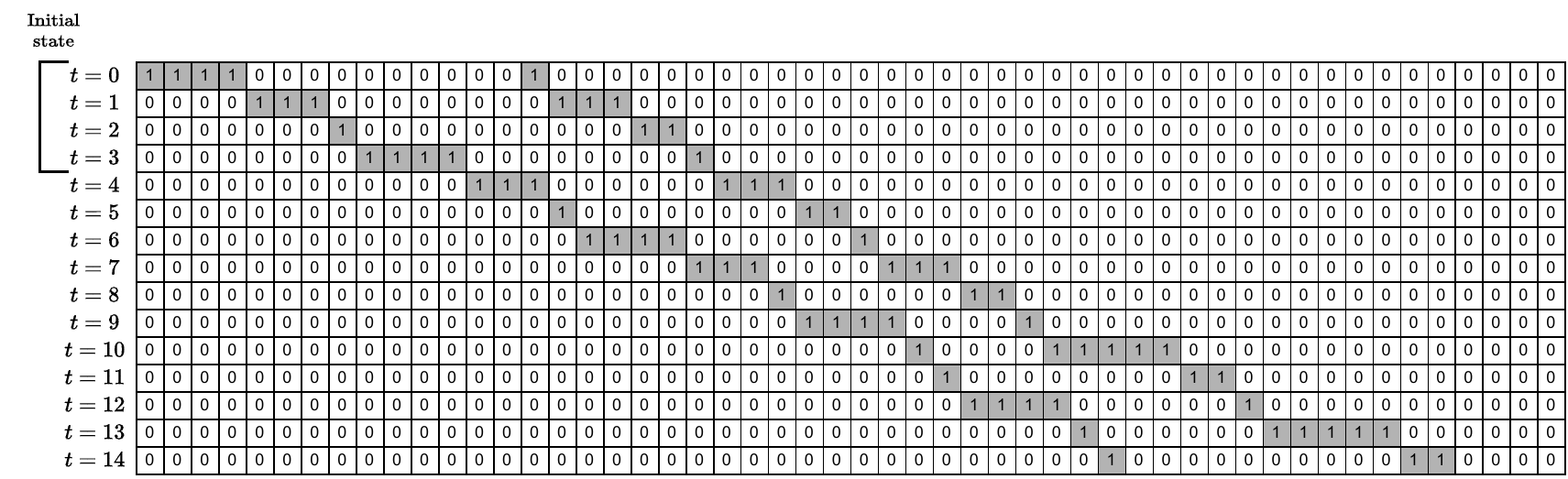}
    \caption{\label{fig_nml_evo22}
    Solitons obtained by normal initial states ($\al=2,\ \be=2,\ N=2$).}
\end{figure}

Finally, note that normal solitons cannot have an empty queue, namely normal solitons $(L,0)$ or $(0,L)$ ($L$ is a natural number) do not exist.
It is because they do not satisfy condition (A) in definition \ref{def_normal}, in addition they cannot keep their shape in the time evolution as we can see from figure \ref{fig_emptyqueue_evo}.
\begin{figure}
    \centering
    \includegraphics[width=12cm]
    {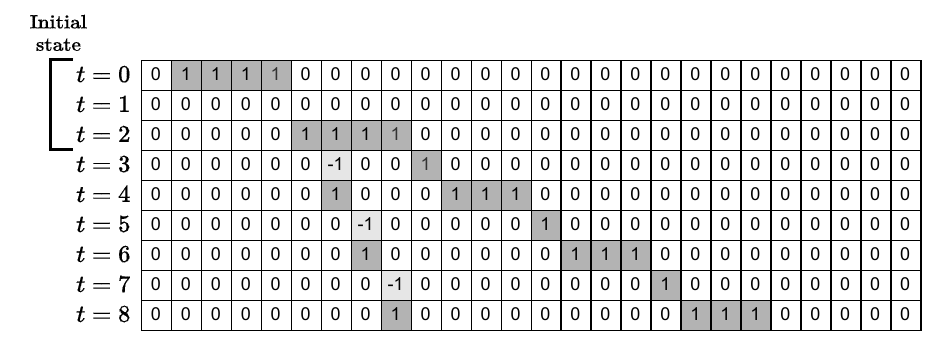}
    \caption{\label{fig_emptyqueue_evo}
    Soliton $(4,0)$, which is not a normal soliton, cannot keep its shape ($\al=1,\ \be=1,\ N=2$).}
\end{figure}

\end{subsection}

\begin{subsection}{Abnormal solitons}

Figure \ref{fig_non_evo} shows soliton interactions of the delay BBS obtained by setting non-normal initial states, which we call \textit{abnormal initial states}.
For example, we can see that the soliton pattern $U_{n}^{t}=U_{n+\be}^{t+\al+1}=1$ appears in figure \ref{fig_non_evo} (a).
We call a soliton with this pattern a \textit{jumping soliton}.
(Because the value of $(t,n)$ jumps to $(t+\al+1,n+\be)$.)
In addition, we can observe other new patterns in figures \ref{fig_non_evo} (b), (c) and (d), which we call a \textit{long-period soliton}, a \textit{negative soliton with speed $1$} and a \textit{triangle pattern} respectively.
We define such solitons which are generated by abnormal initial states as \textit{abnormal solitons}.
\begin{figure}
    \centering
    \includegraphics[width=9cm]
    {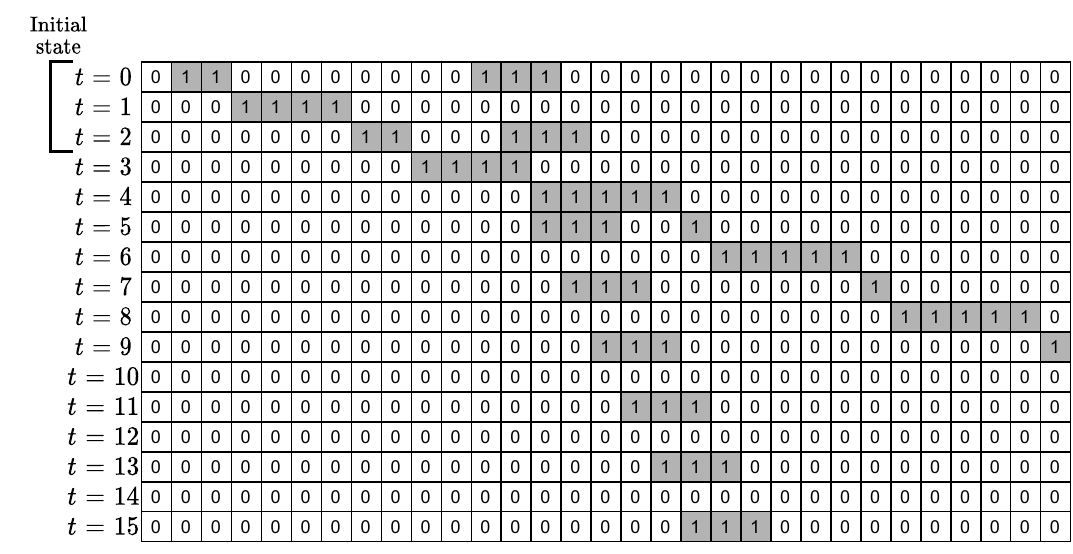}
    \mbox{(a) $\al=1,\ \be=1$.}
    \centering
    \includegraphics[width=9cm]
    {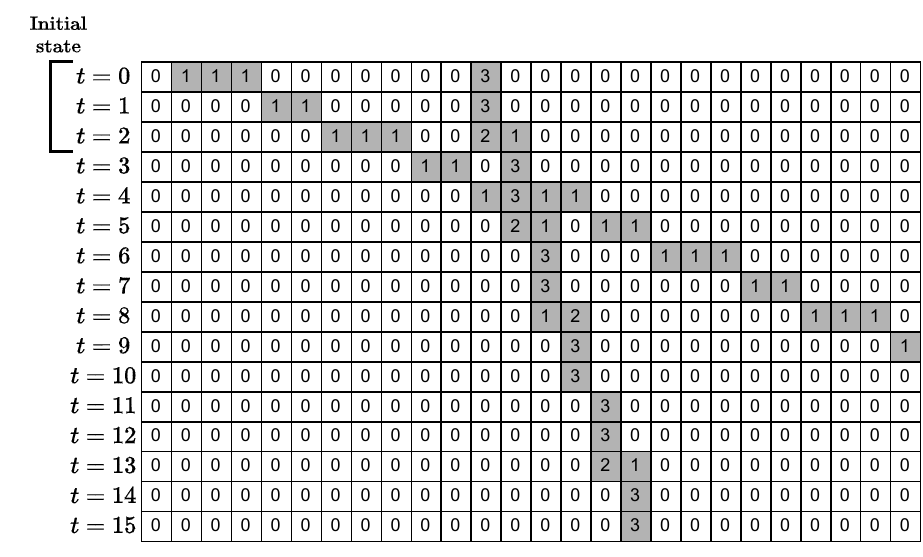}
    \mbox{(b) $\al=1,\ \be=1$.}
    \centering
    \includegraphics[width=9cm]
    {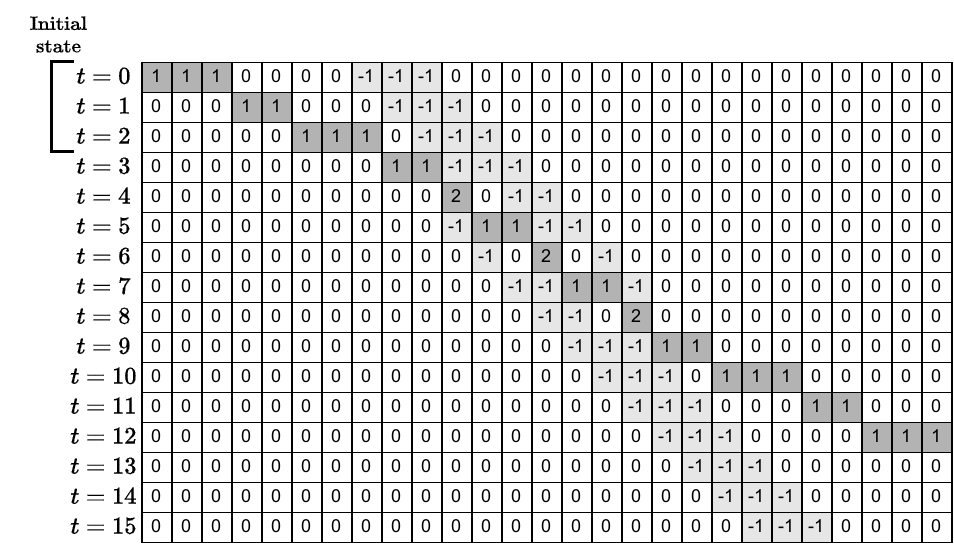}
    \mbox{(c) $\al=1,\ \be=1$.}
    \centering
    \includegraphics[width=9cm]
    {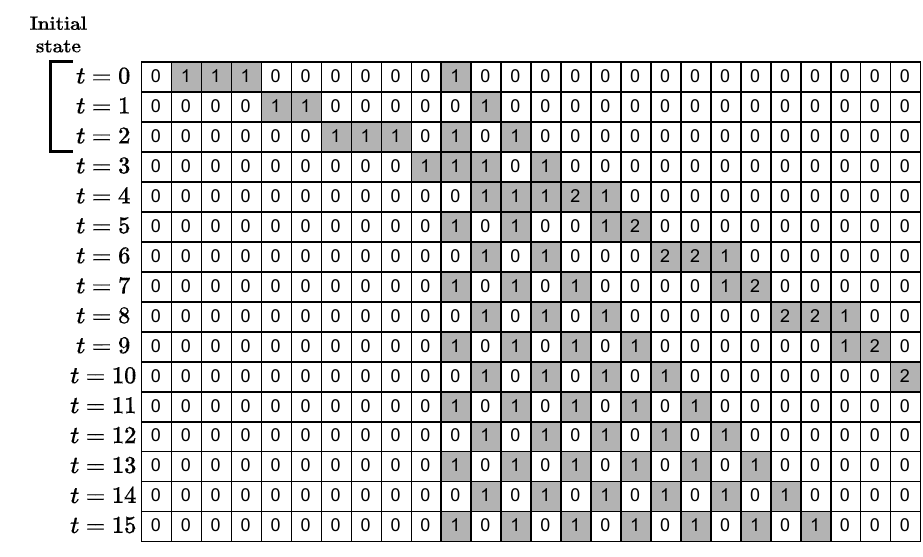}
    \mbox{(d) $\al=1,\ \be=0$.}
\caption{\label{fig_non_evo}
Solitons obtained by abnormal initial states. (The figures show interactions of a normal soliton and an abnormal soliton.)}
\end{figure}

Triangle patterns are observed in the ultra-discrete Toda lattice equation~\cite{Takahashi4}, and a pattern similar to negative solitons with speed $1$ is studied in~\cite{Hirota3}.
However, these abnormal solitons are not observed commonly in known BBSs to the best of our knowledge.

Now, we consider $1$-soliton solutions of them in the bilinear formulation.
From the transformation (\ref{dlybbs_trans}) and an abnormal soliton pattern $U_{n}^{t}$ obtained by a numerical experiment, we can derive a solution $F_{n}^{t}$ for each type of abnormal soliton.
In addition, detailed numerical experiments let us extend the solution to general cases of parameters such as $\al,\be$.
We show $1$-soliton solutions of the jumping and long-period patterns in general forms below.

\subsubsection{The jumping pattern}
\begin{thm}
    \label{thm_jump_element}
    For $\al\geq0$ and $\be\geq1$, the function $J_{n}^{t}$ is defined as follows:
    \begin{eqnarray*}
        J_{n}^{t}=\max\left(0,Cj_{n}^{t}\right)\,,\quad
        j_{n}^{t}=\floor{\frac{a-t}{\al+1}}+\floor{\frac{n-b}{\be}}\,,\quad
        0\leq C\leq1\,,
    \end{eqnarray*}
    where $a,b$ are constant integers, and $C$ is a real constant, and $\floor{x}$ is the floor function, i.e., the maximum integer no more than $x$.
    Then $F_{n}^{t}=J_{n}^{t}$ is a solution of the bilinear equation of the delay BBS (\ref{dlyudishlv_bl}).
\end{thm}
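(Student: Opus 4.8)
The plan is to exploit an invariance of $J_n^t$ that collapses the four-term bilinear relation (\ref{dlyudishlv_bl}) into a single scalar inequality. First I would record the two elementary floor identities
\begin{equation*}
    \floor{\frac{a-(t+\al+1)}{\al+1}}=\floor{\frac{a-t}{\al+1}}-1\,,\qquad
    \floor{\frac{(n+\be)-b}{\be}}=\floor{\frac{n-b}{\be}}+1\,,
\end{equation*}
which give $j_{n+\be}^{t+\al+1}=j_n^t$ and hence the invariance $J_{n+\be}^{t+\al+1}=J_n^t$. Applying this to the three shifted terms that carry a full period, namely $F_{n+1+\be}^{t+1+\al}=F_{n+1}^t$, $F_{n+\be}^{t+1+\al}=F_n^t$ and $F_{n+\be}^{t+\al}=F_n^{t-1}$, the bilinear equation (\ref{dlyudishlv_bl}) becomes
\begin{equation*}
    F_{n+1}^t+F_n^{t-1}=\max\left(F_n^t+F_{n+1}^{t-1}-1,\ F_n^{t-1}+F_{n+1}^t\right)\,.
\end{equation*}
The crucial observation is that the second entry of the $\max$ is exactly the left-hand side, so the whole identity reduces to the single inequality $F_n^{t-1}+F_{n+1}^t\geq F_n^t+F_{n+1}^{t-1}-1$.

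It then remains to verify this inequality with $F=J$. Since $C\geq0$ one has $\max(0,Cj)=C\max(0,j)$, so writing $x^+=\max(0,x)$ and abbreviating $p(t)=\floor{(a-t)/(\al+1)}$, $q(n)=\floor{(n-b)/\be}$, the inequality reads $C\bigl(z^++w^+-x^+-y^+\bigr)\leq1$, where $x,y,z,w$ denote the four $j$-values $j_n^{t-1},j_{n+1}^t,j_n^t,j_{n+1}^{t-1}$. Setting $A=p(t-1)-p(t)$ and $B=q(n+1)-q(n)$, each floor changes by at most $1$ over a unit step (here $\al\geq0$ and $\be\geq1$ are used), so $A,B\in\{0,1\}$, and a direct substitution gives $z=x-A$, $w=x+B$, $y=x-A+B$; in particular $x+y=z+w$.

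The last step is a short case analysis on $(A,B)$. If $A=0$ or $B=0$ the pairs $\{x,y\}$ and $\{z,w\}$ coincide as multisets, so $z^++w^+-x^+-y^+=0$. If $A=B=1$ the quantity equals the second difference $(x-1)^++(x+1)^+-2x^+$ of the convex positive-part function, which is $1$ when $x=0$ and $0$ otherwise; in every case it is at most $1$. Hence $C\bigl(z^++w^+-x^+-y^+\bigr)\leq C\leq1$, and the inequality---and with it the theorem---follows.

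I expect the only genuinely non-routine point to be spotting the invariance $J_{n+\be}^{t+\al+1}=J_n^t$ and realizing that it forces one $\max$-argument to coincide with the left-hand side; this is what turns a seemingly four-term max-plus identity into a one-line convexity estimate. The role of the hypothesis $0\leq C\leq1$ is then transparent: it is exactly the slack needed so that the unit jump in the second difference of $\max(0,\cdot)$ does not exceed the constant $1$ on the right-hand side of (\ref{dlyudishlv_bl}).
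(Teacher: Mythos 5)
Your proof is correct. The first half --- the floor identities giving the periodicity $J_{n+\be}^{t+\al+1}=J_{n}^{t}$, the resulting collapse of (\ref{dlyudishlv_bl}) to $J_{n+1}^{t}+J_{n}^{t-1}=\max(J_{n}^{t}+J_{n+1}^{t-1}-1,\ J_{n}^{t-1}+J_{n+1}^{t})$, and the observation that the second argument of the $\max$ is the left-hand side, so that only the inequality $J_{n}^{t}+J_{n+1}^{t-1}-1\leq J_{n}^{t-1}+J_{n+1}^{t}$ remains --- is exactly the paper's reduction. Where you diverge is in how that inequality is verified. The paper uses the monotonicity $j_{n}^{t}\leq j_{n+1}^{t},j_{n}^{t-1}\leq j_{n+1}^{t-1}$ to enumerate six sign patterns $\dag_1,\ldots,\dag_6$ of which $J$-values vanish, and bounds each case by hand via $\floor{(a-t+1)/(\al+1)}-\floor{(a-t)/(\al+1)}\in\{0,1\}$. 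You instead parametrize the four $j$-values by the two unit increments $A,B\in\{0,1\}$ of the floors, note the linear relation $x+y=z+w$, and reduce everything to the second difference $(x-1)^{+}+(x+1)^{+}-2x^{+}\leq1$ of the convex function $\max(0,\cdot)$. Your finish is shorter, pinpoints the structural reason the constant $1$ and the hypothesis $0\leq C\leq1$ suffice (the unit jump of the second difference occurs only at $x=0$), and avoids splitting the argument by sign; the paper's six-case list, on the other hand, is set up so that it can be reused verbatim in the multi-soliton extension that follows (the cases $\ddag_1,\ddag_2$ and $\diamondsuit_1,\ldots,\diamondsuit_4$ there refer back to $\dag_1,\ldots,\dag_6$, and in particular need to know exactly \emph{where} equality with the $-1$ term is attained, which your convexity argument also identifies as the locus $x=0$, i.e.\ $(t,n)\equiv(a+1,b-1)$). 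Either route is complete; yours is the cleaner proof of this single theorem.
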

\begin{proof}
    Because the case $C=0$ is obvious, we assume $0<C<1$.
    We first substitute $J_{n}^{t}$ into the bilinear equation (\ref{dlyudishlv_bl}).
    Since $j_{n+\al}^{t+\be}=j_{n}^{t-1}$, the following equation is obtained:
    \begin{equation*}
        J_{n+1}^{t}+J_{n}^{t-1}
        =\max(J_{n}^{t}+J_{n+1}^{t-1}-1,
        J_{n}^{t-1}+J_{n+1}^{t})\,.
    \end{equation*}
    Therefore it is sufficient to prove $J_{n}^{t}+J_{n+1}^{t-1}-1\leq J_{n}^{t-1}+J_{n+1}^{t}$.
    Using the obvious relations
    \begin{equation*}
        j_{n}^{t}\leq j_{n+1}^{t}\leq j_{n+1}^{t-1}\,,\qquad
        j_{n}^{t}\leq j_{n}^{t-1}\leq j_{n+1}^{t-1}\,,
    \end{equation*}
    we need to consider only the following 6 cases:
    \begin{equation*}
    \begin{array}{lllll}
        \dag_1 &J_{n}^{t}=0,\ &J_{n+1}^{t}=0,\ &J_{n}^{t-1}=0,\ &J_{n+1}^{t-1}=0.\\
        \dag_2 &J_{n}^{t}=0,\ &J_{n+1}^{t}=0,\ &J_{n}^{t-1}=0,\ &J_{n+1}^{t-1}=Cj_{n+1}^{t-1}.\\
        \dag_3 &J_{n}^{t}=0,\ &J_{n+1}^{t}=Cj_{n+1}^{t},\ &J_{n}^{t-1}=0,\ &J_{n+1}^{t-1}=Cj_{n+1}^{t-1}.\\
        \dag_4 &J_{n}^{t}=0,\ &J_{n+1}^{t}=0,\ &J_{n}^{t-1}=Cj_{n}^{t-1},\ &J_{n+1}^{t-1}=Cj_{n+1}^{t-1}.\\
        \dag_5 &J_{n}^{t}=0,\ &J_{n+1}^{t}=Cj_{n+1}^{t},\ &J_{n}^{t-1}=Cj_{n}^{t-1},\ &J_{n+1}^{t-1}=Cj_{n+1}^{t-1}.\\
        \dag_6 &J_{n}^{t}=Cj_{n}^{t},\ &J_{n+1}^{t}=Cj_{n+1}^{t},\ &J_{n}^{t-1}=Cj_{n}^{t-1},\ &J_{n+1}^{t-1}=Cj_{n+1}^{t-1}.
    \end{array}
    \end{equation*}
    Cases $\dag_1$ and $\dag_6$ are obvious.
    In case $\dag_2$, it holds that $j_{n+1}^{t}\leq0$ since $J_{n+1}^{t}=0$ and $C>0$.
    Thus we obtain
    \begin{eqnarray*}
        Cj_{n+1}^{t-1}
        &=C\left(\floor{\frac{a-t+1}{\al+1}}+\floor{\frac{n-b+1}{\be}}\right)\\
        &\leq C\left(\floor{\frac{a-t+1}{\al+1}}-\floor{\frac{a-t}{\al+1}}\right)
        \leq C
        \leq1\,.
    \end{eqnarray*}
    Note that we used $\floor{(a-t+1)/(\al+1)}-\floor{(a-t)/(\al+1)}\in\{0,1\}$.
    This leads to
    \begin{equation*}
        J_{n}^{t}+J_{n+1}^{t-1}-1
        =Cj_{n+1}^{t-1}-1
        \leq0
        =J_{n}^{t-1}+J_{n+1}^{t}\,.
    \end{equation*}
    In case $\dag_3$, we obtain
    \begin{eqnarray*}
        Cj_{n+1}^{t-1}-Cj_{n+1}^{t}
        &=C\left(\floor{\frac{a-t+1}{\al+1}}-\floor{\frac{a-t}{\al+1}}\right)
        \leq C
        \leq1\,.
    \end{eqnarray*}
    This leads to
    \begin{equation*}
        J_{n}^{t}+J_{n+1}^{t-1}-1
        =Cj_{n+1}^{t-1}-1
        \leq Cj_{n+1}^{t}
        =J_{n}^{t-1}+J_{n+1}^{t}\,.
    \end{equation*}
    Cases $\dag_4$ and $\dag_5$ can be proved in the same way.
\end{proof}
Figure \ref{fig_jump_evo} shows a jumping soliton described by the above $J_{n}^{t}\ (\al=\be=C=1)$.
\begin{figure}
    \centering
    \includegraphics[width=6cm]
    {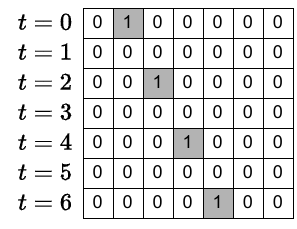}
\caption{\label{fig_jump_evo}
A jumping soliton which corresponds to $J_{n}^{t}$ ($\al=1,\ \be=1,\ C=1$).}
\end{figure}
In this theorem, the constant parameter $C$ corresponds to the amplitude of the jumping soliton.
By numerical experiments, we can realize the amplitude does not need to be a positive number.
Thus we can expect the above $J_{n}^{t}$ to be extended to the case $C<0$.
Actually, the following $J_{n}^{t}$ is a solution of the bilinear equation of the delay BBS (\ref{dlyudishlv_bl}):
\begin{equation*}
    J_{n}^{t}=\min\left(0,Cj_{n}^{t}\right)\,,\quad
    C<0\,,
\end{equation*}
which can be proved in the same way as theorem \ref{thm_jump_element}.
On the other hand, if $C>1$, function $J_{n}^{t}=\max\left(0,Cj_{n}^{t}\right)$ is not a solution of (\ref{dlyudishlv_bl}).
In numerical experiments, no jumping-soliton pattern is observed when the amplitude is greater than 1.

Multiple jumping solitons can be placed as shown in figure \ref{fig_non_evo} (a).
We can extend theorem \ref{thm_jump_element} as follows.
\begin{thm}
    For $\al\geq0$ and $\be\geq1$, the function $\hat{J}_{n}^{t}$ is defined as follows:
    \begin{eqnarray*}
        \fl \hat{J}_{n}^{t}=\sum_{k=1}^{K}E_{n;k}^{t}\,,\quad
        E_{n;k}^{t}=\max\left(0,C_kj_{n;k}^{t}\right)\,,\quad
        j_{n;k}^{t}=\floor{\frac{a_k-t}{\al+1}}+\floor{\frac{n-b_k}{\be}}\,,\\
        0\leq C_k\leq1\,,\quad
        (a_k,b_k)\not\equiv (a_l,b_l)\ (k\neq l)\,,
    \end{eqnarray*}
    where $a_k,b_k,K$ are constant integers, and $C_k$ is a real constant, and $(a,b)\equiv (c,d)$ is defined as follows:
    \begin{quote}
        There exists an integer $m$ such that $a=c+(\al+1)m$ and $b=d+\be m$.
    \end{quote}
    Then $F_{n}^{t}=\hat{J}_{n}^{t}$ is a solution of the bilinear equation of the delay BBS (\ref{dlyudishlv_bl}).
\end{thm}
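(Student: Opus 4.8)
The plan is to reduce the claim, exactly as in the proof of theorem \ref{thm_jump_element}, to a single scalar inequality and then to control the contributions of the individual summands $E_{n;k}^{t}$. First I would note that the three index-shift identities underlying theorem \ref{thm_jump_element}, namely $j_{n+1+\be}^{t+1+\al}=j_{n+1}^{t}$, $j_{n+\be}^{t+1+\al}=j_{n}^{t}$ and $j_{n+\be}^{t+\al}=j_{n}^{t-1}$, depend only on the floor structure of $j_{n;k}^{t}$ and hold for every $k$ irrespective of $a_k,b_k$. Consequently each $E_{n;k}^{t}$, and hence $\hat{J}_{n}^{t}$, satisfies $\hat{J}_{n+1+\be}^{t+1+\al}=\hat{J}_{n+1}^{t}$, $\hat{J}_{n+\be}^{t+1+\al}=\hat{J}_{n}^{t}$ and $\hat{J}_{n+\be}^{t+\al}=\hat{J}_{n}^{t-1}$, so substituting $F_{n}^{t}=\hat{J}_{n}^{t}$ into (\ref{dlyudishlv_bl}) collapses it to
\[
\hat{J}_{n+1}^{t}+\hat{J}_{n}^{t-1}=\max\left(\hat{J}_{n}^{t}+\hat{J}_{n+1}^{t-1}-1,\ \hat{J}_{n}^{t-1}+\hat{J}_{n+1}^{t}\right).
\]
Since the left-hand side is literally equal to the second entry of the $\max$, the whole statement becomes the single inequality $\sum_{k}D_{k}\leq1$, where $D_{k}:=E_{n;k}^{t}+E_{n+1;k}^{t-1}-E_{n;k}^{t-1}-E_{n+1;k}^{t}$.

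Second I would evaluate each $D_{k}$ explicitly. Writing $E_{n;k}^{t}=C_k\max(0,j_{n;k}^{t})$, putting $x=j_{n;k}^{t}$ and setting the increments $\delta^{A}_{k}=\floor{\frac{a_k-t+1}{\al+1}}-\floor{\frac{a_k-t}{\al+1}}$ and $\delta^{B}_{k}=\floor{\frac{n+1-b_k}{\be}}-\floor{\frac{n-b_k}{\be}}$, which lie in $\{0,1\}$ because $\al+1\geq1$ and $\be\geq1$, one has $j_{n;k}^{t-1}=x+\delta^{A}_{k}$, $j_{n+1;k}^{t}=x+\delta^{B}_{k}$ and $j_{n+1;k}^{t-1}=x+\delta^{A}_{k}+\delta^{B}_{k}$. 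Thus $D_{k}=C_k\bigl(\max(0,x)+\max(0,x+\delta^{A}_{k}+\delta^{B}_{k})-\max(0,x+\delta^{A}_{k})-\max(0,x+\delta^{B}_{k})\bigr)$ is $C_k$ times a mixed second difference of the convex map $\max(0,\cdot)$. This is nonnegative, vanishes unless $\delta^{A}_{k}=\delta^{B}_{k}=1$, and when both equal $1$ it reduces to $\max(0,x)+\max(0,x+2)-2\max(0,x+1)$, which is $1$ exactly at $x=-1$ and $0$ otherwise. Hence $D_{k}=C_k$ precisely when $\delta^{A}_{k}=\delta^{B}_{k}=1$ and $j_{n;k}^{t}=-1$, and $D_{k}=0$ in every other case; in particular $D_{k}\in\{0,C_k\}\subseteq[0,1]$.

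The main obstacle, and the only place where the hypothesis $(a_k,b_k)\not\equiv(a_l,b_l)$ is used, is to show that at most one index $k$ can have $D_{k}>0$ at a fixed $(n,t)$; granting this, $\sum_{k}D_{k}$ is either $0$ or a single $C_{k_0}\leq1$, so the reduced inequality holds. I would argue by contradiction: suppose $D_{k}>0$ and $D_{l}>0$ with $k\neq l$. For each such index the conditions $\delta^{A}=\delta^{B}=1$ and $j=-1$ say that $a-t\equiv\al\!\pmod{\al+1}$, $n-b\equiv\be-1\!\pmod{\be}$, and $\floor{\frac{a-t}{\al+1}}+\floor{\frac{n-b}{\be}}=-1$. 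Writing $a_{\bullet}-t=(\al+1)p_{\bullet}+\al$ and $n-b_{\bullet}=\be q_{\bullet}+(\be-1)$ for $\bullet=k,l$, the last condition reads $p_{\bullet}+q_{\bullet}=-1$. Eliminating $t$ gives $a_k-a_l=(\al+1)(p_k-p_l)$ and eliminating $n$ gives $b_k-b_l=\be(q_l-q_k)$; but $p_k+q_k=p_l+q_l=-1$ forces $p_k-p_l=q_l-q_k$, so the single integer $m:=p_k-p_l$ realizes both $a_k=a_l+(\al+1)m$ and $b_k=b_l+\be m$. That is exactly $(a_k,b_k)\equiv(a_l,b_l)$, contradicting the hypothesis. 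Therefore at most one summand is nonzero, $\sum_{k}D_{k}\leq1$, and $F_{n}^{t}=\hat{J}_{n}^{t}$ solves (\ref{dlyudishlv_bl}). The routine verifications are the convexity/second-difference computation of step two and the shift identities of step one; the genuine content is the uniqueness argument of the last step, where the non-equivalence assumption is precisely what prevents two jumping solitons from reinforcing one another beyond the single-soliton bound.
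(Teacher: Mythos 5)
Your proposal is correct and follows essentially the same route as the paper: reduce the bilinear equation via the shift identities to the single inequality $\hat{J}_{n}^{t}+\hat{J}_{n+1}^{t-1}-1\leq\hat{J}_{n}^{t-1}+\hat{J}_{n+1}^{t}$, show each summand's contribution $D_k$ lies in $\{0,C_k\}$ with the nonzero case confined to a single $\equiv$-class of $(t,n)$ determined by $(a_k,b_k)$, and invoke pairwise non-equivalence to bound the total deficit by one. Your packaging of the case analysis as a mixed second difference of the convex map $\max(0,\cdot)$, and your explicit elimination argument for the uniqueness of the contributing index, are tidier than the paper's enumeration of cases $\ddag$ and $\diamondsuit$ but carry the same content.
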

\begin{proof}
    Similarly to theorem \ref{thm_jump_element}, it is sufficient to prove $\hat{J}_{n}^{t}+\hat{J}_{n+1}^{t-1}-1\leq \hat{J}_{n}^{t-1}+\hat{J}_{n+1}^{t}$.
    According to calculations in theorem \ref{thm_jump_element}, we can obtain
    \begin{equation*}
    \begin{array}{lll}
        \ddag_1
        &E_{n;k}^{t}+E_{n+1;k}^{t-1}-1\leq E_{n;k}^{t-1}+E_{n+1;k}^{t}
        &(j_{n;k}^{t}\leq0\leq j_{n+1;k}^{t-1})\,.\\
        \ddag_2
        &E_{n;k}^{t}+E_{n+1;k}^{t-1}=E_{n;k}^{t-1}+E_{n+1;k}^{t}
        &(0\leq j_{n;k}^{t} \mathrm{\ or\ } j_{n+1;k}^{t-1}\leq0)\,.
    \end{array}
    \end{equation*}
    Case $\ddag_1$ corresponds to $\dag_2$, $\dag_3$, $\dag_4$ and $\dag_5$.
    Case $\ddag_2$ corresponds to $\dag_1$ and $\dag_6$.
    Now, we consider case $\ddag_1$ in more detail.
    Case $\ddag_1$ can be divided into the following 4 cases:
    \begin{eqnarray*}
    \fl
    \begin{array}{lllll}
        \diamondsuit_1 &j_{n;k}^{t}=0,\ &j_{n+1;k}^{t}=0,\ &j_{n;k}^{t-1}=0,\ &j_{n+1;k}^{t-1}=0.\\
        \diamondsuit_2(c_0) &j_{n;k}^{t}=-1+c_0,\ &j_{n+1;k}^{t}=-1+c_0,\ &j_{n;k}^{t-1}=c_0,\ &j_{n+1;k}^{t-1}=c_0.\\
        \diamondsuit_3(c_0) &j_{n;k}^{t}=-1+c_0,\ &j_{n+1;k}^{t}=c_0,\ &j_{n;k}^{t-1}=-1+c_0,\ &j_{n+1;k}^{t-1}=c_0.\\
        \diamondsuit_4(d_0) &j_{n;k}^{t}=-1+d_0,\ &j_{n+1;k}^{t}=d_0,\ &j_{n;k}^{t-1}=d_0,\ &j_{n+1;k}^{t-1}=d_0+1.\\
    \end{array}\\
    (c_0=0,1\,,\quad d_0=-1,0,1)
    \end{eqnarray*}
    In these cases, only $\diamondsuit_4(0)$ leads to
    \begin{equation*}
        E_{n;k}^{t}+E_{n+1;k}^{t-1}-1=E_{n;k}^{t-1}+E_{n+1;k}^{t}\,,
    \end{equation*}
    and the other cases lead to
    \begin{equation*}
        E_{n;k}^{t}+E_{n+1;k}^{t-1}=E_{n;k}^{t-1}+E_{n+1;k}^{t}\,.
    \end{equation*}
    Since case $\diamondsuit_4(0)$ occurs at the points where $(t,n)\equiv(a_k+1,b_k-1)$, we obtain
    \begin{equation*}
    \left\{
    \begin{array}{ll}
        E_{n;k}^{t}+E_{n+1;k}^{t-1}-1=E_{n;k}^{t-1}+E_{n+1;k}^{t}
        &((t,n)\equiv(a_k+1,b_k-1))\\
        E_{n;k}^{t}+E_{n+1;k}^{t-1}=E_{n;k}^{t-1}+E_{n+1;k}^{t}
        &((t,n)\not\equiv(a_k+1,b_k-1))
    \end{array}
    \right.\,.
    \end{equation*}
    Thus we conclude $\hat{J}_{n}^{t}+\hat{J}_{n+1}^{t-1}-1\leq \hat{J}_{n}^{t-1}+\hat{J}_{n+1}^{t}$.
\end{proof}
We note that the jumping soliton in figure \ref{fig_non_evo} (a) is obtained by setting $C_1=C_2=C_3=1,\ K=3,\ \al=1,\ \be=1$.

\begin{rem}
As we mentioned in section \ref{sec_cons}, in the case of $\al=\be=0$, the delay BBS (\ref{dlybbs_lv1}) is just a subtraction of the two BBSs.
However, this subtracted BBS allows a soliton pattern which is not observed in the well-known BBS, which is the jumping solitons $U_{n}^{t}=U_{n}^{t+1}=C$ ($C\leq 1$) shown in figure \ref{fig_nml_jump_evo1}.
\begin{figure}
    \centering
    \includegraphics[width=12cm]
    {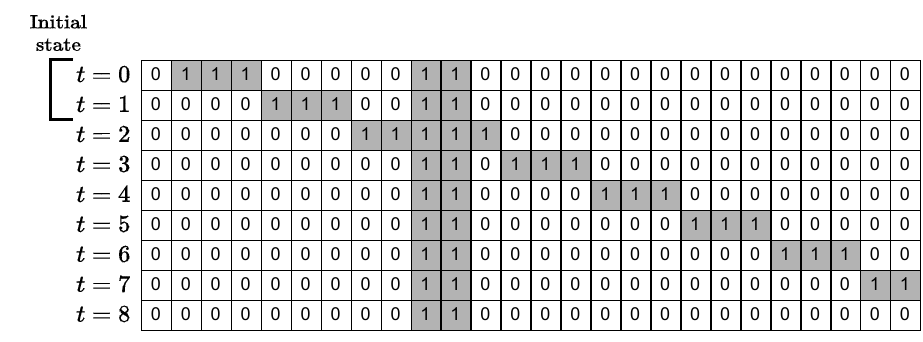}
    \caption{\label{fig_nml_jump_evo1}
    A normal soliton and a jumping soliton ($\al=0,\ \be=0$).}
\end{figure}
\end{rem}

\subsubsection{The long-period pattern}
\begin{thm}
    \label{thm_longperiod}
    For $\al\geq0$ and $1\leq\be\leq\al+1$, the function $L_{n}^{t}$ is defined as follows:
    \begin{eqnarray*}
        L_{n}^{t}=\max\left(0,l_{n}^{t}\right)\,,\quad
        l_{n}^{t}=A(a-t)+B(n-b)\,,\quad
        B=\frac{(\al+2)A-1}{\be}\,,\quad
        A\geq1\,,
    \end{eqnarray*}
    where $a,b,A$ are constant integers.
    Then $F_{n}^{t}=L_{n}^{t}$ is a solution of the bilinear equation of the delay BBS (\ref{dlyudishlv_bl}).
\end{thm}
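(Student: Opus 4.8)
The plan is to follow the route of Theorem~\ref{thm_jump_element}: substitute $F_n^t = L_n^t = \max(0, l_n^t)$ into the bilinear equation~(\ref{dlyudishlv_bl}) and reduce it to a piecewise-linear identity in a single real variable. Since $l_n^t = A(a-t) + B(n-b)$ is affine in $(t,n)$, every shifted value $l_{n+p}^{t+q}$ equals $x := l_n^t$ plus a constant depending only on $A, B, \al, \be$; hence each term of~(\ref{dlyudishlv_bl}) has the form $\max(0, x+c)$ and the equation collapses to an identity between two piecewise-linear functions of the one variable $x$.

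First I would compute the six relevant shifts, using the defining relation $B\be = (\al+2)A - 1$ to clear the parameters. A direct calculation gives
\[
\begin{array}{lll}
l_{n+1+\be}^{t+1+\al} = x+A+B-1, & l_n^{t-1} = x+A, & l_{n+\be}^{t+1+\al} = x+A-1,\\
l_{n+1}^{t-1} = x+A+B, & l_{n+\be}^{t+\al} = x+2A-1, & l_{n+1}^{t} = x+B,
\end{array}
\]
so that~(\ref{dlyudishlv_bl}) reduces to the claim that, for every real $x$,
\begin{eqnarray*}
&&\max(0, x+A+B-1) + \max(0, x+A)\\
&&\quad = \max\bigl( \max(0,x+A-1) + \max(0,x+A+B) - 1,\ \max(0,x+2A-1) + \max(0,x+B) \bigr).
\end{eqnarray*}

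The two facts that control this identity are $A \geq 1$ (a hypothesis) and $B \geq A$, the latter following from $\be \leq \al+1$ and $A \geq 1$ because $B = [(\al+2)A-1]/\be \geq [(\al+2)A-1]/(\al+1) \geq A$; in particular $B \geq A \geq 1$. I would split the real line at the breakpoints of the left-hand side, namely into $x \leq 1-A-B$, $1-A-B \leq x \leq -A$, and $x \geq -A$ (using $B \geq 1$ to order them). The two outer regions are short direct checks: for $x \geq -A$ both left maxima are active and sum to $2x+2A+B-1$, which equals the second argument of the outer maximum on the right and strictly exceeds the first; for $x \leq 1-A-B$ both left maxima vanish, while on the right the second summand-pair vanishes (here $B \geq A$ and $A \geq 1$ are used) and the first expression is nonpositive, so both sides equal $0$.

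The only delicate region is the middle one, where the left-hand side equals $x+A+B-1$, matching the first argument of the outer maximum exactly; it then remains to show the second argument $\max(0,x+2A-1) + \max(0,x+B)$ never exceeds $x+A+B-1$ there. I expect this to be the main (if still mild) obstacle, because the breakpoints of that second argument, $x = 1-2A$ and $x = -B$, are not ordered uniformly in the parameters, so a naive case split proliferates. I would avoid this by observing that $(x+A+B-1) - \max(0,x+2A-1) - \max(0,x+B)$ is \emph{concave} in $x$, hence attains its minimum over the middle interval at an endpoint; a one-line evaluation at $x = 1-A-B$ and at $x = -A$ (both yield $0$, again using $B \geq A \geq 1$) shows this minimum is $0$, which completes the verification.
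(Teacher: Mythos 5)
Your proposal is correct and follows essentially the same route as the paper: substitute $L_n^t$ into (\ref{dlyudishlv_bl}), use $B\be=(\al+2)A-1$ to reduce everything to a piecewise-linear identity in the single variable $x=l_n^t$, and exploit $B\geq A\geq1$ (derived exactly as in the paper from $\be\leq\al+1$) to verify it. The only difference is organizational: where the paper splits into eight sign-pattern cases $\clubsuit_1,\ldots,\clubsuit_8$ (in particular using two separate cases to handle the non-uniform ordering of the breakpoints $1-2A$ and $-B$), you split the line at the two breakpoints of the left-hand side and dispose of the delicate middle interval by noting that the deficit function is concave and vanishes at both endpoints --- a valid and slightly more economical way to carry out the same verification.
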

\begin{proof}
    Substituting $L_{n}^{t}$ into the bilinear equation (\ref{dlyudishlv_bl}) and using the relations
    \begin{eqnarray*}
        \fl
        L_{n}^{t-1}=\max\left(0,l_{n}^{t}+A\right)\,,\quad
        L_{n+1}^{t}=\max\left(0,l_{n}^{t}+B\right)\,,\quad
        L_{n+\be}^{t+\al}=\max\left(0,l_{n}^{t}+2A-1\right)\,,
    \end{eqnarray*}
    equation (\ref{dlyudishlv_bl}) is transformed into
    \begin{eqnarray}
        \eqalign{
        \label{longperiod_bl}
        &\max\left(0,l_{n}^{t}+A+B-1\right)
        +\max\left(0,l_{n}^{t}+A\right)\\
        =\max(&\max\left(0,l_{n}^{t}+2A-1\right)
        +\max\left(0,l_{n}^{t}+B\right),\\
        &\max\left(0,l_{n}^{t}+A-1\right)
        +\max\left(0,l_{n}^{t}+A+B\right)-1)\,.
        }
    \end{eqnarray}
    Since $1\leq\be\leq\al+1$,
    \begin{equation*}
        B=\frac{(\al+2)A-1}{\be}\geq\frac{(\al+2)A-1}{\al+1}=A+\frac{A-1}{\al+1}\geq A\,.
    \end{equation*}
    Thus we need to consider only the following 8 cases:
    \begin{equation*}
    \begin{array}{lrl}
        \clubsuit_1 & l_{n}^{t}+A+B\leq0.\\
        \clubsuit_2 & l_{n}^{t}+A+B-1\leq0, & 0\leq l_{n}^{t}+A+B.\\
        \clubsuit_3 & l_{n}^{t}+B,\ l_{n}^{t}+2A-1\leq0, & 0\leq l_{n}^{t}+A+B-1.\\
        \clubsuit_4 & l_{n}^{t}+B\leq0, & 0\leq l_{n}^{t}+2A-1.\\
        \clubsuit_5 & l_{n}^{t}+2A-1\leq0, & 0\leq l_{n}^{t}+B.\\
        \clubsuit_6 & l_{n}^{t}+A\leq0, & 0\leq l_{n}^{t}+B,\ l_{n}^{t}+2A-1.\\
        \clubsuit_7 & l_{n}^{t}+A-1\leq0, & 0\leq l_{n}^{t}+A.\\
        \clubsuit_8 & & 0\leq l_{n}^{t}+A-1.
    \end{array}
    \end{equation*}
    For example, in case $\clubsuit_2$, equation (\ref{longperiod_bl}) is $0=\max\left(0, l_{n}^{t}+A+B-1\right)$.
    Since the condition $l_{n}^{t}+A+B-1\leq0$, this equation holds true.
    Other cases can be proved similarly.
\end{proof}
We note that the long-period soliton in figure \ref{fig_non_evo} (b) is obtained by setting $A=3,\ \al=1,\ \be=1$.

The above theorem can be extended as follows:
\begin{thm}
\label{thm_longperiod_extend}
    For $\al\geq0$ and $\be\geq1$, the integers $T,N,I,J$ are uniquely defined as
    \begin{eqnarray}
        \label{TNIJ}
        \eqalign{
        a-t=(\al+1)T+I\quad &(0\leq I\leq\al)\,,\\
        n-b=\be N+J\quad &(0\leq J\leq\be-1)\,,
        }
    \end{eqnarray}
    where $a,b$ is a constant integer.
    In addition the function $L_{n}^{t}$ is defined as follows:
    \begin{eqnarray*}
        L_{n}^{t}=\max\left(0,l_{n}^{t}\right)\,,\quad
        l_{n}^{t}=\tilde{A}T+\sum_{i=1}^{I}A_i
        +\tilde{B}N+\sum_{j=1}^{J}B_j\,,\\
        A_i=A\geq1\quad (1\leq i\leq\al+1)\,,\\
        \tilde{A}=\sum_{i=1}^{\al+1}A_i=(\al+1)A\,,\\
        B_j\geq A\quad (1\leq j\leq\be)\,,\\
        \tilde{B}=\sum_{j=1}^{\be}B_j=(\al+2)A-1\,,
    \end{eqnarray*}
    where $A,B_j$ are constant integers.
    Then $F_{n}^{t}=L_{n}^{t}$ is a solution of the bilinear equation of the delay BBS (\ref{dlyudishlv_bl}).
\end{thm}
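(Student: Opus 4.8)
The plan is to reduce the identity for the extended $L_n^t$ to the single-index identity already verified in the proof of Theorem \ref{thm_longperiod}. The first thing I would record is that, because every $A_i$ equals $A$, the $t$-dependence of $l_n^t$ collapses to an exactly linear term: from $a-t=(\al+1)T+I$ one computes $\tilde A T+\sum_{i=1}^I A_i=(\al+1)A\,T+IA=A\big[(\al+1)T+I\big]=A(a-t)$, so that $l_n^t=A(a-t)+g(n)$ with $g(n):=\tilde B N+\sum_{j=1}^J B_j$ carrying all of the $n$-dependence. The function $g$ is piecewise linear rather than linear, but only two of its features matter: a full period adds the fixed amount $g(n+\be)=g(n)+\tilde B$, and a unit step adds a single coefficient, $g(n+1)=g(n)+B_{J+1}$, where the wraparound at $J=\be-1$ contributes $B_\be$. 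I would verify this wraparound bookkeeping explicitly, since it is the one place where the variable coefficients $B_j$ could cause trouble.

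With $l_n^t=A(a-t)+g(n)$ in hand, I would evaluate the six shifted arguments appearing in the bilinear equation (\ref{dlyudishlv_bl}). Writing $s:=B_{J+1}$ for the local step of $g$ at the point under consideration, the linearity of the $t$-part together with the two properties of $g$ gives
\[ l_{n+1+\be}^{t+1+\al}=l_n^t+A-1+s,\quad l_n^{t-1}=l_n^t+A,\quad l_{n+\be}^{t+1+\al}=l_n^t+A-1, \]
\[ l_{n+1}^{t-1}=l_n^t+A+s,\quad l_{n+\be}^{t+\al}=l_n^t+2A-1,\quad l_{n+1}^t=l_n^t+s. \]
Here the combination $l_{n+\be}^{t+\al}=l_n^t+2A-1$ is exactly the one that appeared in Theorem \ref{thm_longperiod}, and it is precisely where the constraint $\tilde B=(\al+2)A-1$ (rather than the individual values of the $B_j$) enters.

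Substituting these into (\ref{dlyudishlv_bl}) with $F_n^t=\max(0,l_n^t)$, I expect the equation at a point with $(n-b)\bmod\be=J$ to reduce verbatim to equation (\ref{longperiod_bl}), with the single constant $B$ there replaced by the local value $s=B_{J+1}$. The concluding step is then to note that the eight-case verification in the proof of Theorem \ref{thm_longperiod} uses nothing about $B$ beyond the inequality $B\geq A$: the breakpoints $-(A+B),-(A+B-1),-B,-(2A-1),-A,-(A-1)$ are ordered using only $B\geq A\geq1$, and the two cases $\clubsuit_4$ and $\clubsuit_5$ are arranged exactly so that both possible orderings of $-B$ and $-(2A-1)$ are covered. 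Since the hypothesis guarantees $B_{J+1}\geq A$ for every $J$, those same eight cases apply with $B$ replaced by $s$, and the identity holds at every $(t,n)$.

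The main obstacle is not the case analysis, which is inherited wholesale from Theorem \ref{thm_longperiod}, but the reduction that precedes it: one must be sure that a unit step in $n$ always contributes exactly one of the $B_j$ and that a shift by $\be$ contributes exactly $\tilde B$, so that the variable-coefficient sum $g(n)$ is replaced, locally and exactly, by a single parameter $s\geq A$. Once this is confirmed, the extended statement follows with no new piecewise-linear computation.
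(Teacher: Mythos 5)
Your proposal is correct and takes essentially the same route as the paper, whose proof is a single sentence stating that the argument of theorem \ref{thm_longperiod} carries over via $L_{n+1}^{t}=\max\left(0,l_{n}^{t}+B_{J+1}\right)$ (with the wraparound coefficient at $J=\be-1$). Your reduction of $l_n^t$ to $A(a-t)+g(n)$ with local step $s=B_{J+1}\geq A$, followed by the observation that the eight-case analysis of equation (\ref{longperiod_bl}) uses only $B\geq A\geq1$, is precisely the content the paper leaves implicit.
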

This theorem can be proved in the same way as theorem \ref{thm_longperiod} by using $L_{n+1}^{t}=\max\left(0,l_{n}^{t}+B_{J+1}\right)$ or $L_{n+1}^{t}=\max\left(0,l_{n}^{t}+B_{0}\right)$.
We remark that theorem \ref{thm_longperiod} is the case
\begin{eqnarray*}
    B_j=\frac{(\al+2)A-1}{\be}\quad (1\leq j\leq\be)
\end{eqnarray*}
of theorem \ref{thm_longperiod_extend}.

\begin{rem}
\label{rem_form}
The jumping 1-soliton solution $J_{n}^{t}$ and the long-period 1-soliton solution $L_{n}^{t}$ can be represented in the same form.
The jumping 1-soliton solution $J_{n}^{t}$ in theorem \ref{thm_jump_element} is rewritten as follows:
\begin{eqnarray*}
    J_{n}^{t}=\max\left(0,Cj_{n}^{t}\right)\,,\quad
    j_{n}^{t}=\tilde{A}T+\sum_{i=1}^{I}A_i
    +\tilde{B}N+\sum_{j=1}^{J}B_j\,,\\
    A_i=
    \left\{
    \begin{array}{ll}
        0 & (1\leq i\leq\al)\\
        1 & (i=\al+1)
    \end{array}
    \right.\,,\\
    \tilde{A}=\sum_{i=1}^{\al+1}A_i=1\,,\\
    B_j=
    \left\{
    \begin{array}{ll}
        0 & (1\leq j\leq\be-1)\\
        1 & (j=\be)
    \end{array}
    \right.\,,\\
    \tilde{B}=\sum_{j=1}^{\be}B_j=1\,,
\end{eqnarray*}
where $T,N,I,J$ are defined as (\ref{TNIJ}).

This representation is a natural extension of the representation used in~\cite{Tokihiro2}.
Future studies should clarify the most general forms of the exact solutions including the multi-soliton solutions in order to fully discuss abnormal solitons.
\end{rem}
\color{black}

\end{subsection}

\end{section}

\begin{section}{The rule for the time evolution}
\label{sec_rule}

\begin{subsection}{Normal solitons for $\be=0$}
\label{subsec_rule_nml_beta0}

In this section, we consider the rule for the time evolution of the delay BBS (\ref{dlybbs_lv1}).
First, we assume that initial states are normal and $\be=0$.
Considering equation (\ref{dlybbs_lv1}) on the normal solitons, we can compute the state of time $t+\al+2$ from those of times only $t+1,\ t+\al+1$ by the following rule:
\begin{itemize}
    \item [(i)] We define the function $P_{n}^{t+\al+2}$ as follows:
    \begin{eqnarray*}
        \fl\hspace{20mm}\left\{
        \begin{array}{ll}
            P_{n}^{t+\al+2}=-1&\mbox{$\head_{t+1}(j)+1\leq n\leq\head_{t+\al+1}(j),\ \ j=1,\ldots,N$}\\
            P_{n}^{t+\al+2}=0&\mbox{otherwise}
        \end{array}
        \right.\,.
    \end{eqnarray*}
    \item [(ii)] The carrier moves from the $-\infty$ site to the $\infty$ site on the time $t+1$.
    When the carrier passes the $n$th box, we apply the following process:
    \begin{itemize}
        \item If there is a ball ($U_{n}^{t+1}=1$), the carrier gets this ball from the box.
        \item If there is no ball ($U_{n}^{t+1}=0$) and the carrier carries at least $1$ ball, the carrier puts a ball into the box.
    \end{itemize}
    However, if $P_{n}^{t+\al+2}=-1$, the above process is not applied and the carrier passes the box without doing anything (the function $P$ means ``prohibited'').
    \item [(iii)] The resulting state is the state of time $t+\al+2$.
\end{itemize}

Now, we can reconstruct the normal solitons shown in figures \ref{fig_nml_evo10} and \ref{fig_nml_evo11}.
Using the above rule, we obtain them again as shown in figures \ref{fig_nml_evo_rule1} and \ref{fig_nml_evo_rule2}.
\begin{figure}
    \centering
    \includegraphics
    {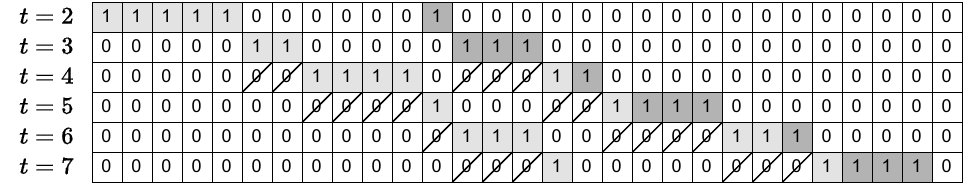}
    \caption{\label{fig_nml_evo_rule1}
    Normal solitons obtained by the rule ($\al=1,\ \be=0,\ N=2$).
    Diagonal lines on the boxes mean the values of the function $P$ are $-1$.} 
\end{figure}
\begin{figure}
    \centering
    \includegraphics
    {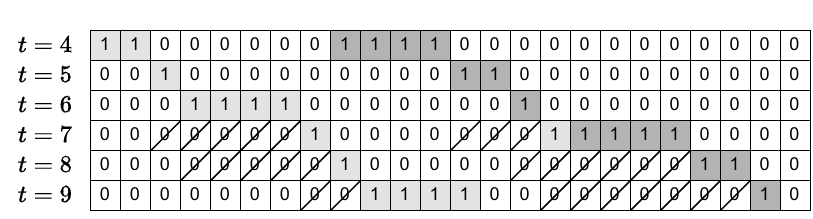}
    \caption{\label{fig_nml_evo_rule2}
    Normal solitons obtained by the rule ($\al=2,\ \be=0,\ N=2$).
    Diagonal lines on the boxes mean the values of the function $P$ are $-1$.}
\end{figure}

\begin{rem}
Equation (\ref{dlybbs_lv1}) for the case $\be=0$ can be rewritten as
\begin{eqnarray*}
    U_{n}^{t+\al+2}=P_{n}^{t+\al+2}+\min\left(1-U_{n}^{t+1},\sum_{j=-\infty}^{n-1} U_{j}^{t+1} - \sum_{j=-\infty}^{n-1} U_{j}^{t+\al+2}\right)\,.
\end{eqnarray*}
Thus, if $\min\left(1-U_{n}^{t+1},\sum_{j=-\infty}^{n-1} U_{j}^{t+1} - \sum_{j=-\infty}^{n-1} U_{j}^{t+\al+2}\right)=1$, but if $P_{n}^{t+\al+2}=-1$, then $U_{n}^{t+\al+2}=0$ (namely a ball is not put into the $n$th box).
This can be interpreted as that if $P_{n}^{t+\al+2}=-1$, putting a ball into the $n$th box is prohibited.
\end{rem}

As we can see from figures \ref{fig_nml_evo10} and \ref{fig_nml_evo11} (or \ref{fig_nml_evo_rule1} and \ref{fig_nml_evo_rule2}), normal solitons for $\be=0$ travel forward satisfying conditions (A), (B), and (D) of definition \ref{def_normal} for all the times.
More precisely, the following properties hold true for normal solitons.
    \begin{itemize}
        \item [(A')] For all $t=0,1,2,\ldots$, there exist the locations $\tail_t(j),\head_t(j)\ \ (j=1,\ldots,N)$ such that
        \[
        \fl\hspace{20mm}\tail_t(1)\leq\head_t(1)<
        \tail_t(2)\leq\head_t(2)<\ldots<
        \tail_t(N)\leq\head_t(N)\,,
        \]
        and
        \[
        \left\{
        \begin{array}{cl}
            U_{n}^{t}=1&\mbox{if $\ \ \tail_t(j)\leq n\leq\head_t(j)\ \ $ for a $\ \ j=1,\ldots,N$}\\
            U_{n}^{t}=0&\mbox{otherwise}
        \end{array}
        \right.\,.
        \]
        \item [(B')] For all $t=1,2,\ldots$ and $j=1,\ldots,N$, it satisfies
        \[\tail_{t}(j)=\head_{t-1}(j)+1\,.\]
        \item [(D')$_{0}$] For all $t=0,1,2,\ldots$ and $j=2,\ldots,N$, it satisfies
        \[\head_{t+\al+1}(j-1)+1\leq\tail_{t}(j)\,.\]
        In other words, each queue at time $t+\al+1$ does not overlap with the front soliton's queue at time $t$.
        Note that this property holds only when $\be=0$.
    \end{itemize}
(A'), (B'), (D')$_{0}$ are important properties of the delay BBS for $\be=0$.
This is because the above rule for the time evolution can be described more easily by using these properties.
For example, in figure \ref{fig_nml_evo_rule1}, we can consider that the back soliton's queue at time 2 (length 5) is divided into the lengths 4 and 1 at time 4 in order that property (D')$_{0}$ is satisfied.

In general, the rule for the time evolution (i), (ii) and (iii) can be described easily as follows.
\begin{itemize}
    \item [(I)] The queues at time $t$ are copied at time $t+\al+1$ so that (B') is satisfied.
    \item [(II)$_{0}$] The generated queues at time $t+\al+1$ are split and moved to the front soliton's queue in order that (D')$_{0}$ is satisfied (namely, in order that the generated queues do not overlap with the front soliton's queue at time $t$).
    Note that this rule holds only when $\be=0$.
\end{itemize}

\begin{eg}
    The time evolution shown in figure \ref{fig_nml_evo_rule1} is explained as follows.
    \begin{itemize}
        \item [$t=4$] The back soliton's queue at time 2 (length 5) is copied and divided into the lengths 4 and 1 by rule (II)$_{0}$, to avoid the overlap with the front soliton's queue at time 2.
        \item [$t=5$] The back soliton's queue at time 3 (length 2) is copied and divided into the lengths 1 and 1 by rule (II)$_{0}$, to avoid the overlap with the front soliton's queue at time 3.
        \item [$t=6$] The back soliton's queue at time 4 (length 4) is copied and divided into the lengths 3 and 1 by rule (II)$_{0}$, to avoid the overlap with the front soliton's queue at time 4.
        \item [$t=7$] The back soliton's queue at time 5 (length 1) is copied, and (II)$_{0}$ is not applied since the generated queue does not overlap with the front soliton's queue at time 5.
    \end{itemize}
    The time evolution shown in figure \ref{fig_nml_evo_rule2} is explained as follows.
    \begin{itemize}
        \item [$t=7$] The back soliton's queue at time 4 (length 2) is copied and divided into the lengths 1 and 1 by rule (II)$_{0}$, to avoid the overlap with the front soliton's queue at time 4.
        \item [$t=8$] The back soliton's queue at time 5 (length 1) is copied, and (II)$_{0}$ is not applied since the generated queue does not overlap with the front soliton's queue at time 5.
        \item [$t=9$] The back soliton's queue at time 6 (length 4) is copied, and (II)$_{0}$ is not applied since the generated queue does not overlap with the front soliton's queue at time 6.
    \end{itemize}
\end{eg}

Now, we show an interesting property about the normal solitons.
According to rule (II)$_{0}$, solitons $(5,2)$ and $(1,3)$ interact differently depending on the distance of the solitons set by the initial state.
However, they always deform into $(3,4)$ and $(3,1)$ after any interactions of $(5,2)$ and $(1,3)$.
In general, we can prove the following theorem.
\begin{thm}
    \label{thm_interaction}
    We assume $\al=1$, $\be=0$, and $x_1,x_2,y_1,y_2$ are integers such that
    \[x_1,x_2,y_1,y_2\geq1\,,\qquad x_1+x_2\geq y_1+y_2\,.\]
    The normal solitons $(x_1,x_2)$ and $(y_1,y_2)$ deform as follows after interaction.
    \begin{itemize}
        \item [$\spadesuit_1$] If $x_1\geq y_2$ and $x_2\geq y_1$, the solitons deform into $(x_1+y_1-y_2, x_2-y_1+y_2)$ and $(y_2,y_1)$.
        \item [$\spadesuit_2$] If $x_1\geq y_2$ and $x_2\leq y_1$, the solitons deform into $(x_1+x_2-y_2, y_2)$ and $(y_1-x_2+y_2,x_2)$.
        \item [$\spadesuit_3$] If $x_1\leq y_2$ and $x_2\geq y_1$, the solitons deform into $(y_1, x_2+x_1-y_1)$ and $(x_1,y_2-x_1+y_1)$.
    \end{itemize}
    This interaction law does not depend on the distance of solitons set by the initial state.
\end{thm}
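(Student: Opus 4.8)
The plan is to argue entirely with the combinatorial rules (I) and (II)$_{0}$ established above, which for $\al=1$, $\be=0$ determine the evolution of normal solitons completely, and to follow the queue lengths through the collision. First I would record the single-soliton dictionary. By property (B$'$) the tail of queue $j$ advances by exactly its current length, $\tail_{t+1}(j)=\tail_t(j)+\ell_t(j)$ with $\ell_t(j)=\head_t(j)-\tail_t(j)+1$, so a soliton labelled $(x_1,x_2)$ is precisely one whose queue length alternates $x_1,x_2,x_1,\dots$ (the period-$2$ pattern being forced by condition (C)) and whose instantaneous speed equals the current length. Thus the label $(x_1,x_2)$ is just a bookkeeping of the two repeating lengths, and far from any other soliton the motion is rigid; since $x_1+x_2\ge y_1+y_2$, the faster soliton $(x_1,x_2)$ is the back one, which I take to be queue $1$, with $(y_1,y_2)$ the front queue $2$.

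Next I would isolate the interaction mechanism. Rule (I) copies the back queue from time $t$ to time $t+2$, and rule (II)$_{0}$ forbids the copied queue from passing the tail of the front queue at time $t$: whenever the naive copy would violate (D$'$)$_{0}$, the \emph{overflow}---the number of cells protruding beyond $\tail_t(2)$---is split off and handed to the front soliton. The heart of the proof is a single transfer lemma: during one contested step a back queue of length $L$ meeting the obstructing front interval transfers exactly the overflow $o$ to the front, leaving a back fragment of length $L-o$ and augmenting the front by $o$ (in the worked example, $L=5$ splits as $4$ and $1$, so $o=1$). I would compute $o$ at each contested time from the current front-tail and back-head positions, then sum the transfers over the whole interaction window.

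The main body is then a finite induction that runs the pair from well-separated data through contact to re-separation, accumulating these transfers. Because each transfer depends only on the two lengths currently in contact, I expect the total transferred to depend only on $x_1,x_2,y_1,y_2$; establishing that this total is genuinely independent of the initial gap---and in particular of which phase is in contact first---is exactly the distance-independence assertion, which I would aim to obtain together with the deformation formulas rather than prove separately. Sorting by which overflows are active yields the three regimes: $\spadesuit_1$ ($x_1\ge y_2,\ x_2\ge y_1$), where both back phases clear the front, so the front re-emerges merely phase-swapped as $(y_2,y_1)$ while the back absorbs the net shift $y_1-y_2$ to become $(x_1+y_1-y_2,\,x_2-y_1+y_2)$; and the asymmetric regimes $\spadesuit_2$ and $\spadesuit_3$, in which one phase is too short to clear the obstruction so part of a back length is retained, giving $(x_1+x_2-y_2,\,y_2),(y_1-x_2+y_2,\,x_2)$ and $(y_1,\,x_2+x_1-y_1),(x_1,\,y_2-x_1+y_1)$ respectively. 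In each case I would verify that the conserved averages $x_1+x_2$ and $y_1+y_2$ are reproduced as a consistency check, and note that the fourth combinatorial possibility $x_1\le y_2,\ x_2\le y_1$ forces equality in $x_1+x_2\ge y_1+y_2$ and hence collapses onto the boundaries of the listed cases.

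I expect the genuine obstacle to be the precise bookkeeping of the transfers across the contested steps. Because the speeds alternate with the delay period, the front tail that obstructs the back queue at time $t$ belongs to the earlier configuration at time $t$ rather than $t+2$, so one must track carefully which phase of each soliton is in contact at each step and verify that the successive overflows telescope to the claimed component shifts $\pm(y_1-y_2)$, etc. Making the transfer lemma uniform enough that the same accounting covers every starting gap---so that distance-independence really does fall out automatically---is the step that will demand the most care.
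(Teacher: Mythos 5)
Your plan follows essentially the same route as the paper's proof: the paper also argues purely from rules (I) and (II)$_{0}$, lets $T$ be the first time a split occurs, writes the first fragment length as $l_1$, and checks that the forced splits at times $T+1$ and $T+2$ (into $y_1,\ x_2-y_1$ and $y_2,\ l_1-y_2$) make $l_1$ cancel from the final queue lengths --- exactly your telescoping/distance-independence claim. The one concrete ingredient your sketch still needs to supply is the lower bound on the first fragment (the paper's observation that $l_1\ge y_2$, since otherwise the configuration at time $T-1$ would already violate (D')$_{0}$), which is what guarantees the split at time $T+2$ goes through and the interaction terminates after time $T+3$.
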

\begin{proof}
    We first consider case $\spadesuit_1$.
    We assume that $T$ is the time when the interaction starts, namely rule (II)$_{0}$ is first applied.
    Without loss of generality, we can assume the queue with length $x_1$ is divided into the lengths $l_1$ and $x_1-l_1$ at time $T$ by rule (II)$_{0}$.
    Here note that $l_1\geq y_2$ because if $l_1<y_2$ then the state of time $T-1$ is against rule (II)$_{0}$, i.e.,
    $\tail_{T-3}(2)\leq\head_{T-1}(1)$.
    Now let us denote $\head_{t}(k)-\tail_{t}(k)$ by $\ell_{t}(k)$.
    Since $x_2\geq y_1$, the back soliton's queue with length $x_2$ at time $T-1$ is divided into the lengths $y_1$ and $x_2-y_1$ at time $T+1$ by rule (II)$_{0}$.
    This leads to $\ell_{T+1}(1)=y_1$ and $\ell_{T+1}(2)=y_2+(x_2-y_1)$ (This is illustrated in figure \ref{fig_interaction}).
    Similarly, since $l_1\geq y_2$, the back soliton's queue with length $l_1$ at time $T$ is divided into the lengths $y_2$ and $l_1-y_2$ at time $T+2$ by rule (II)$_{0}$.
    This leads to $\ell_{T+2}(1)=y_2$ and $\ell_{T+2}(2)=\ell_{T}(2)+(l_1-y_2)=y_1+x_1-y_2$.
    As shown in figure \ref{fig_interaction}, rule (II)$_{0}$ is never used after time $T+3$.
    Thus case $\spadesuit_1$ has been proved.
    The same method can be used to prove other cases.
\end{proof}
\begin{figure}
    \centering
    \includegraphics[width = 18cm]
    {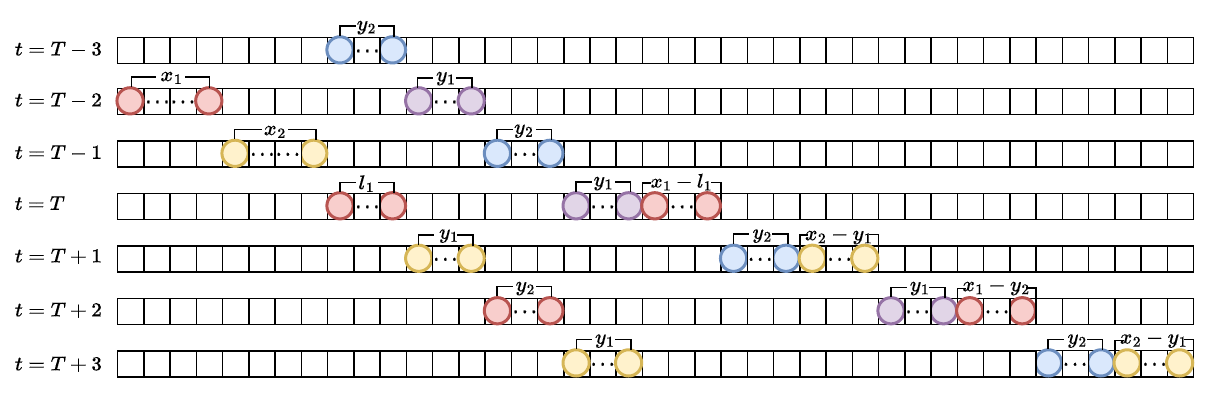}
    \caption{\label{fig_interaction}
    Interaction of $(x_1,x_2)$ and $(y_1,y_2)$ (the case of $x_1\geq y_2$ and $x_2\geq y_1$).}
\end{figure}

This theorem immediately leads to several conservation laws.
\begin{cor}
    \label{cor_conservation}
    We consider the situation of theorem \ref{thm_interaction}.
    Let $(u_1,u_2)$ and $(v_1,v_2)$ be the solitons which $(x_1,x_2)$ and $(y_1,y_2)$ deform after interaction.
    Then it holds that
    \begin{eqnarray}
        \label{conservation1}
        u_1+u_2&=&x_1+x_2\,,\\
        \label{conservation2}
        v_1+v_2&=&y_1+y_2\,,\\
        \label{conservation3}
        u_1+v_1&=&x_1+y_1\,,\\
        \label{conservation4}
        u_2+v_2&=&x_2+y_2\,.
    \end{eqnarray}
\end{cor}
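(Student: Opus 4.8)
The plan is to derive the four identities directly from the explicit post-interaction shapes already listed in Theorem \ref{thm_interaction}; no new dynamical input is needed, since the theorem tells us exactly what $(u_1,u_2)$ and $(v_1,v_2)$ are in each of the three cases $\spadesuit_1$, $\spadesuit_2$, $\spadesuit_3$. First I would fix a case, substitute the stated expressions for $u_1,u_2,v_1,v_2$ into the left-hand sides of (\ref{conservation1})--(\ref{conservation4}), and simplify. For instance, in case $\spadesuit_1$ we have $(u_1,u_2)=(x_1+y_1-y_2,\,x_2-y_1+y_2)$ and $(v_1,v_2)=(y_2,y_1)$, whence $u_1+u_2=x_1+x_2$ and $u_1+v_1=x_1+y_1$ fall out immediately after cancellation, and the remaining two identities are identical in character.

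The second step is to repeat this verification for $\spadesuit_2$ and $\spadesuit_3$. Here it is worth recording a structural reading that makes the bookkeeping transparent: identities (\ref{conservation1}) and (\ref{conservation2}) assert that the total length of each individual soliton is preserved, which is built into the way the theorem splits and reassembles the queues — balls are only redistributed between the two legs of a single soliton, never created or destroyed — while (\ref{conservation3}) and (\ref{conservation4}) are the complementary cross-sums, pairing the first legs of the two solitons and the second legs respectively. In each case the four expressions collapse by elementary cancellation, so the computation is routine.

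Finally, I would note a shortcut that trims the case work. Summing (\ref{conservation1}) with (\ref{conservation2}), and summing (\ref{conservation3}) with (\ref{conservation4}), both yield the grand total $x_1+x_2+y_1+y_2$; hence the four identities are not independent, and any three of them force the fourth. It therefore suffices to check, say, (\ref{conservation1}), (\ref{conservation2}) and (\ref{conservation3}) in each case. The key observation is that there is no real obstacle here: all of the genuinely dynamical content — exhaustiveness of the three cases under the hypothesis $x_1+x_2\geq y_1+y_2$, and the independence of the interaction law from the initial separation of the solitons — has already been secured in Theorem \ref{thm_interaction}, so the corollary reduces to pure arithmetic on the output data.
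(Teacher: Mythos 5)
Your proposal is correct and matches the paper's treatment: the paper gives no separate argument, stating only that the corollary follows immediately from Theorem \ref{thm_interaction}, which is exactly the direct substitution and cancellation you carry out in each of the three cases. Your added observations (redundancy of one of the four identities, and exhaustiveness of the cases under $x_1+x_2\geq y_1+y_2$) are accurate but do not change the route.
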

Here, (\ref{conservation1}) and (\ref{conservation2}) are interpreted as the average speed of each soliton being conserved, which is mentioned in section \ref{sec_obs}.
(\ref{conservation3}) and (\ref{conservation4}) can be interpreted as mass preservation.

\end{subsection}

\begin{subsection}{Normal solitons for general $\be$}

Let us consider arbitrary $\be$ such that $\be\leq\al+1$, and assume the initial state is normal.
In this case, we can prove properties (A'), (B'), and (D')$_{\be}$ by the same discussion as $\be=0$.
Here property (D')$_{\be}$ is as follows.
    \begin{itemize}
        \item [(D')$_{\be}$] For all $t=0,1,2,\ldots$ and $j=2,\ldots,N$, it satisfies
        \[\head_{t+\al+1}(j-1)-\tail_{t}(j)\leq\be-1\,.\]
        That is, the overlap between behind the time $t+\al+1$ of each soliton and ahead from the time $t$ of the front soliton is no more than $\be$.
    \end{itemize}
Now, the rule for the time evolution can be extended as follows.
\begin{itemize}
    \item [(I)] The queues at time $t$ are copied at time $t+\al+1$ so that (B') is satisfied.
    \item [(II)$_{\be}$] The generated queues at time $t+\al+1$ are split and moved to the front soliton's queue in order that (D')$_{\be}$ is satisfied (namely, in order that the overlap between behind the time $t+\al+1$ of each soliton and ahead from the time $t$ of the front soliton is no more than $\be$).
\end{itemize}

Figures \ref{fig_nml_evo_rule3} and \ref{fig_nml_evo_rule4} show normal soliton interactions for $\be\neq0$.
In figure \ref{fig_nml_evo_rule3}, we can observe the back soliton's queue at time 3 (length 2) is divided into the lengths 1 and 1 at time 5 according to rule (II)$_{1}$.
Similarly in figure \ref{fig_nml_evo_rule4}, we can observe the back soliton's queue at time 10 is divided into the lengths 1 and 2 according to rule (II)$_{2}$.
\begin{figure}
    \centering
    \includegraphics[width=12cm]
    {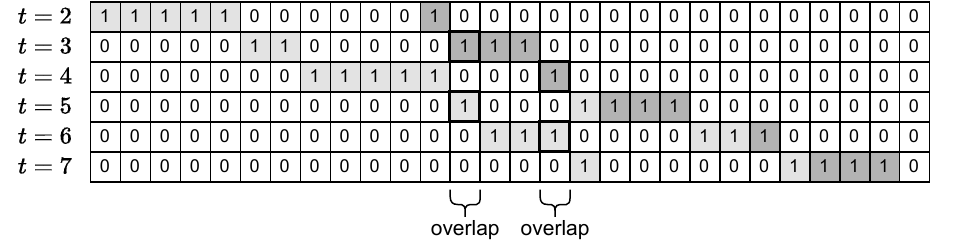}
    \caption{\label{fig_nml_evo_rule3}
    A normal soliton obtained by the rule ($\al=1,\ \be=1,\ N=2$).}
\end{figure}
\begin{figure}
    \centering
    \includegraphics[width=10cm]
    {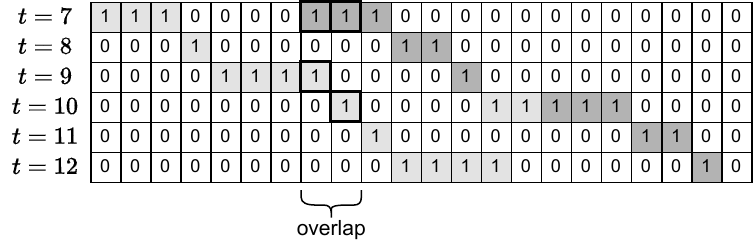}
    \caption{\label{fig_nml_evo_rule4}
    A normal soliton obtained by the rule ($\al=2,\ \be=2,\ N=2$).}
\end{figure}

By the same discussion as $\be=0$, we can prove theorem \ref{thm_interaction} (therefore, corollary \ref{cor_conservation}) for arbitrary $\be\leq\al+1$.

Interestingly, if $\al+1<\be$, the above discussion does not necessarily hold true.
For example, when $\al=1$ and $\be=8$, the delay BBS sometimes generates periodic patterns as shown in figure \ref{fig_nml_evo_periodicPattern}.
These periodic patterns consist of some normal solitons, jumping solitons and negative solitons with speed $1$.
Thus in this case, abnormal solitons appear from a normal initial state.
\begin{figure}
    \centering
    \includegraphics[width=12cm]
    {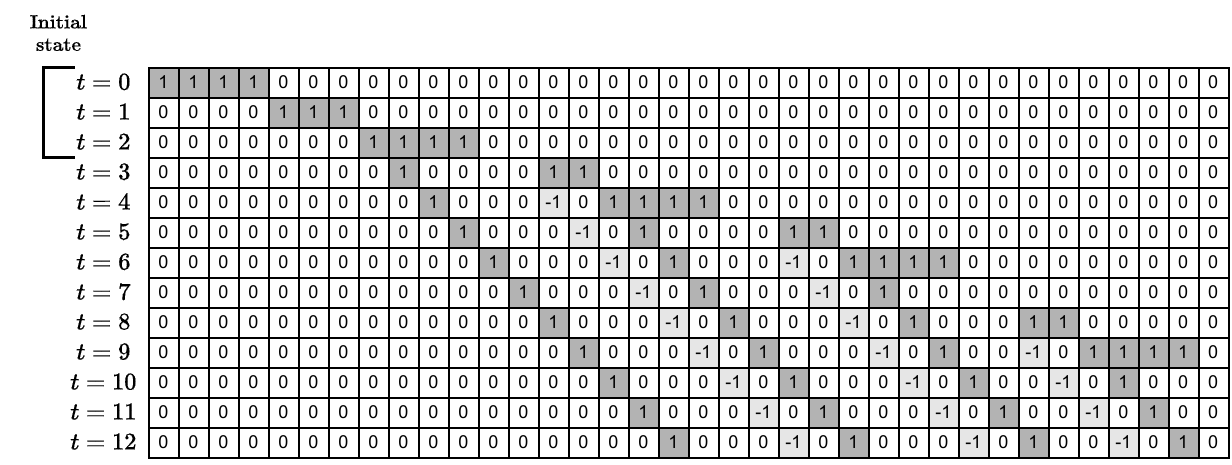}
\caption{\label{fig_nml_evo_periodicPattern}
A soliton pattern in the case of $\al<\be$ ($\al=1,\ \be=8$).
Periodic patterns are generated.}
\end{figure}

\end{subsection}

\begin{subsection}{Interaction of normal and jumping solitons}

Let us extend our discussion to abnormal solitons.
In contrast to normal solitons, abnormal ones sometimes generate complicated interactions which are difficult to analyze.

\subsubsection{Easy cases}

Let us first consider easy cases.
Figure \ref{fig_nml_jump_evo2-1-1} shows interactions of a normal soliton and the most simple jumping solitons in the case of $\al=1,\ \be=0$.
We can observe these solitons deform after interactions.

Now, let us denote by $[1,0]$ the left jumping soliton in figure \ref{fig_nml_jump_evo2-1-1}, and by $[0,1]$ the right jumping soliton.
The difference between $[1,0]$ and $[0.1]$ is the time the balls exist.
Considering the time evolution equation (\ref{dlybbs_lv1}), we can show the following interaction law:

\begin{thm}
    \label{thm_interaction_ex}
    We assume $\al=1$, $\be\leq\al+1=2$, and $x_1,x_2$ are integers.
    \begin{itemize}
        \item If $x_1\geq1,\ x_2\geq2$, the normal solitons $(x_1,x_2)$ and jumping soliton $[1,0]$ deform into normal soliton $(x_1+1, x_2-1)$ and jumping soliton $[0,1]$ after interaction.
        \item If $x_1\geq2,\ x_2\geq1$, the normal solitons $(x_1,x_2)$ and jumping soliton $[0,1]$ deform into normal soliton $(x_1-1, x_2+1)$ and jumping soliton $[1,0]$ after interaction.
    \end{itemize}
    This interaction law does not depend on the distance of solitons set by the initial state.
\end{thm}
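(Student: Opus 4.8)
The plan is to follow the collision directly through the time-evolution equation (\ref{dlybbs_lv1}), in the same spirit as the proof of Theorem \ref{thm_interaction}, but now tracking one normal soliton against the nonstandard jumping pattern. The normal soliton $(x_1,x_2)$ advances with average speed $(x_1+x_2)/2$, whereas the jumping soliton moves $\be$ sites every $\al+1=2$ steps and hence has average speed $\be/2$; in either clause $x_1+x_2\ge3>2\ge\be$, so the normal soliton overtakes the jumping one and the two supports overlap only on a bounded window of times. Outside that window each object evolves freely: the normal soliton by the copy-and-split rules (I) and (II)$_{\be}$ of the previous subsections, and the jumping soliton by the explicit form $J_{n}^{t}$ of Theorem \ref{thm_jump_element}. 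I would first fix an origin so that the jumping soliton occupies the single cell-pair $U_{n}^{t}=U_{n+\be}^{t+2}=1$ corresponding to phase $[1,0]$, and record the incoming normal soliton by its head/tail data $\head_{t}(j),\tail_{t}(j)$ as in Definition \ref{def_normal}.

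Next I would pin down the first time $T$ at which the two supports begin to interact, and then evaluate $U_{n}^{t}$ from (\ref{dlybbs_lv1}) at the times $T+1,T+2,T+3$ on the finitely many overlapping sites, resolving the two $\min$-arguments case by case and checking that the rule is never invoked after $T+3$. The target is to show that the configuration emerging after the window is precisely a normal soliton of internal lengths $(x_1+1,x_2-1)$ together with the jumping soliton in the opposite phase $[0,1]$; that is, one unit of length has been transferred between the two half-periods of the normal soliton while the jumping soliton has flipped which of its two time-slices carries the ball. The identity $(x_1+1)+(x_2-1)=x_1+x_2$ serves as a built-in consistency check and is consistent with Corollary \ref{cor_conservation}.

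The second clause is the exact time-reverse of the first: running the first collision backwards sends $(x_1+1,x_2-1)$ with phase $[0,1]$ back to $(x_1,x_2)$ with phase $[1,0]$, and relabelling $x_1'=x_1+1$, $x_2'=x_2-1$ turns the hypotheses $x_1\ge1,\ x_2\ge2$ into $x_1'\ge2,\ x_2'\ge1$, which is exactly the second clause. I would therefore either invoke the reversibility of (\ref{dlybbs_lv1}) on these configurations or, equivalently, repeat the computation of the previous paragraph with the two phases and the roles of $x_1,x_2$ interchanged. Independence from the initial separation is then immediate: since the collision window has bounded length and both solitons travel at constant average speed before and after it, changing the initial distance only translates the window in $(t,n)$ without affecting the local evaluation, so the output labels $(x_1+1,x_2-1)$ and $[0,1]$ are unchanged.

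The main obstacle I anticipate is the bookkeeping inside the overlap region. The jumping pattern $U_{n}^{t}=U_{n+\be}^{t+2}$ destroys the clean carrier-with-prohibition description that made Theorem \ref{thm_interaction} transparent, so the two $\min$-terms in (\ref{dlybbs_lv1}) can no longer be merged into a single elementary box-and-ball move; instead one must decide, for each relevant site and each of the three input times $t,\ t+1,\ t+\al+1$, whether $1-U$ or the prefix-sum difference attains the minimum. The bounds $x_1\ge1,\ x_2\ge2$ (respectively $x_1\ge2,\ x_2\ge1$) together with $\be\le\al+1$ are precisely what keep the overlap short enough for these case distinctions to resolve uniquely, and verifying that they do so in every subcase is the delicate part of the argument.
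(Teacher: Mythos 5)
The paper gives no proof of Theorem~\ref{thm_interaction_ex}: it is introduced only with the remark ``Considering the time evolution equation (\ref{dlybbs_lv1}), we can show the following interaction law,'' so your strategy of tracking the collision directly through (\ref{dlybbs_lv1}), in the style of the proof of Theorem~\ref{thm_interaction}, is the natural one and there is no authorial argument to compare against. The difficulty is that what you have written is a plan rather than a proof. The entire content of the statement lives in the case-by-case resolution of the two $\min$-terms of (\ref{dlybbs_lv1}) on the overlap window, and you defer exactly that computation while correctly identifying it as ``the delicate part.'' Until those cases are carried out, nothing is established; in particular the claim that the emerging configuration is \emph{exactly} $(x_1+1,x_2-1)$ together with $[0,1]$ is asserted, not derived.

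Two specific steps would also fail as stated. First, the reduction of the second bullet to the first by ``time reversal'' is unjustified: (\ref{dlybbs_lv1}) determines $U_n^{t+\al+2}$ from earlier data and is nowhere shown to be invertible, and even granting reversibility, reversing time flips the direction of propagation and cyclically permutes the phases of the $(\al+1)$-periodic speed, so the reversed picture is not literally an instance of the second bullet; your fallback of redoing the computation with the phases of the normal and jumping solitons interchanged is the only safe route, and it doubles rather than halves the work. Second, confining the interaction to the times $T+1,T+2,T+3$ is not obviously uniform in $x_1,x_2$: the normal soliton must fully overtake the jumping one, and while the window is bounded (the speed gap is at least $(x_1+x_2-\be)/2$), the bound is not automatically three steps and must be proved, exactly as the paper proves that rule (II)$_0$ is never invoked after $T+3$ in Theorem~\ref{thm_interaction}. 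A smaller point: you invoke Theorem~\ref{thm_jump_element} for the free evolution of the jumping soliton, but that theorem assumes $\be\geq1$, whereas the configurations of figures~\ref{fig_nml_jump_evo2-1-1} and \ref{fig_nml_jump_evo2-1-2} (and the natural reading of $[1,0]$, $[0,1]$) have $\be=0$, so the free propagation there must be checked directly from (\ref{dlybbs_lv1}).
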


According to this theorem, we can understand the interaction in figure \ref{fig_nml_jump_evo2-1-1}, namely solitons $(1,3)$ and $[1,0]$ deform into $(2,2)$ and $[0,1]$, and then solitons $(2,2)$ and $[0,1]$ deform into $(1,3)$ and $[1,0]$.

This discussion can be extended to the case of a jumping soliton which consists of only value $1$.
For example, the jumping soliton in figure \ref{fig_nml_jump_evo2-1-2} consists of three $[1,0]$ and two $[0,1]$, which we call jumping solitons $A[1,0]$, $B[0,1]$, $B[1,0]$, $C[0,1]$ and $C[1,0]$, respectively.
According to theorem \ref{thm_interaction_ex}, we can divide the interaction in figure \ref{fig_nml_jump_evo2-1-2} into the following five interactions of normal and jumping solitons:
\begin{eqnarray*}
               &(1,3),\ A[1,0],\ B[0,1],\ B[1,0],\ C[0,1],\ C[1,0]\\
    \rightarrow&A[0,1],\ (2,2),\ B[0,1],\ B[1,0],\ C[0,1],\ C[1,0]\\
    \rightarrow&A[0,1],\ B[1,0],\ (1,3),\ B[1,0],\ C[0,1],\ C[1,0]\\
    \rightarrow&A[0,1],\ B[1,0],\ B[0,1],\ (2,2),\ C[0,1],\ C[1,0]\\
    \rightarrow&A[0,1],\ B[1,0],\ B[0,1],\ C[1,0],\ (1,3),\ C[1,0]\\
    \rightarrow&A[0,1],\ B[1,0],\ B[0,1],\ C[1,0],\ C[0,1],\ (2,2)
\end{eqnarray*}
\begin{figure}
    \centering
    \includegraphics[width=10cm]
    {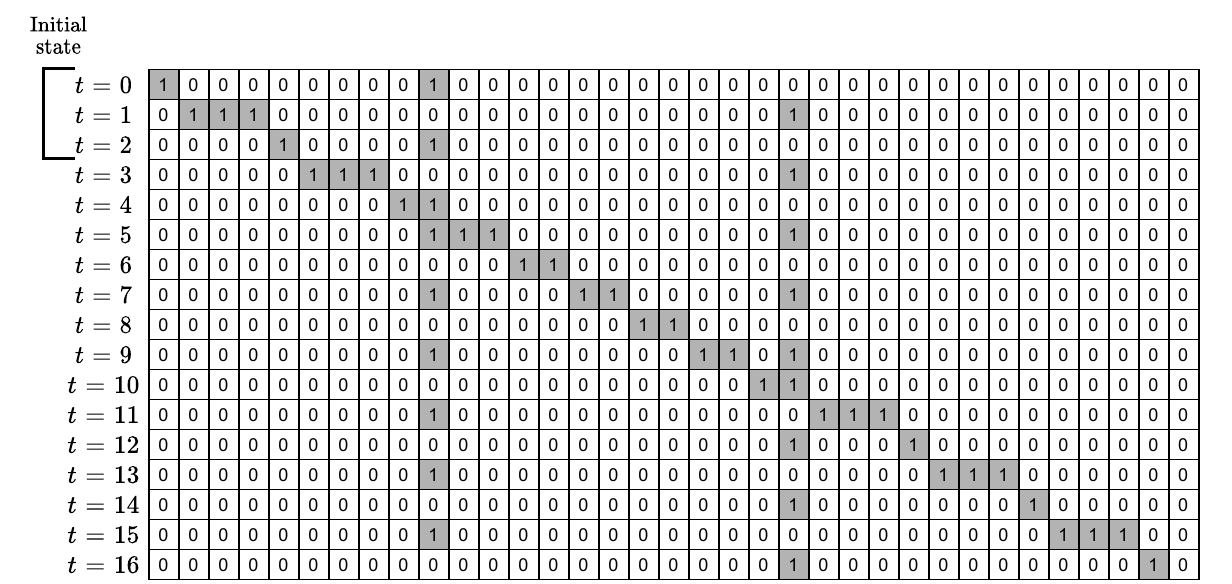}
\caption{\label{fig_nml_jump_evo2-1-1}
An interaction of normal and jumping solitons ($\al=1,\ \be=0$).}
\end{figure}
\begin{figure}
    \centering
    \includegraphics[width=10cm]
    {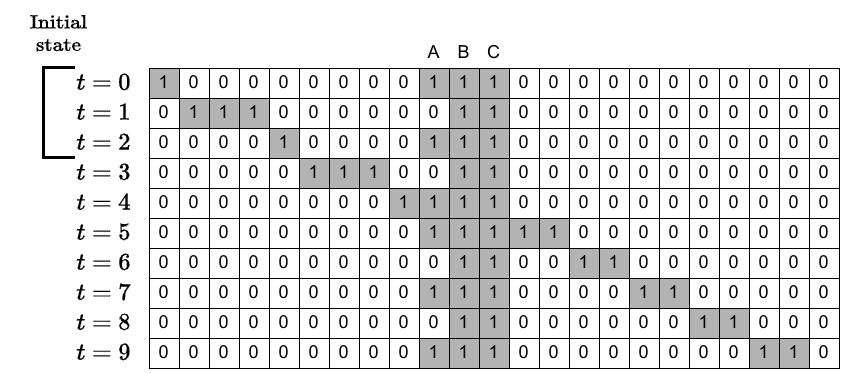}
\caption{\label{fig_nml_jump_evo2-1-2}
An interaction of normal and jumping solitons ($\al=1,\ \be=0$).}
\end{figure}

\subsubsection{Complicated cases}

We move the discussion to several complicated cases.
First, we consider the interaction of a normal soliton $(1,3)$ and a jumping soliton $[0,1]$ shown in figure \ref{fig_nml_jump_evo2-2-1}.
The interaction law in theorem \ref{thm_interaction_ex} does not hold in this case.
If it holds in this case, a normal soliton $(0,4)$ is produced after the interaction, however $(0,4)$ does not exist as a normal soliton as mentioned in section \ref{sec_obs}.
Note that this strange phenomenon is caused by abnormal solitons.

Figure \ref{fig_nml_jump_evo2-2-2} and \ref{fig_nml_jump_evo2-2-3} show interactions of three or more normal and jumping normal solitons.
We can observe that values composing the jumping solitons change after these interactions.
These phenomena should be explained by developing the discussion in this subsection.

\begin{figure}
    \centering
    \includegraphics[width=10cm]
    {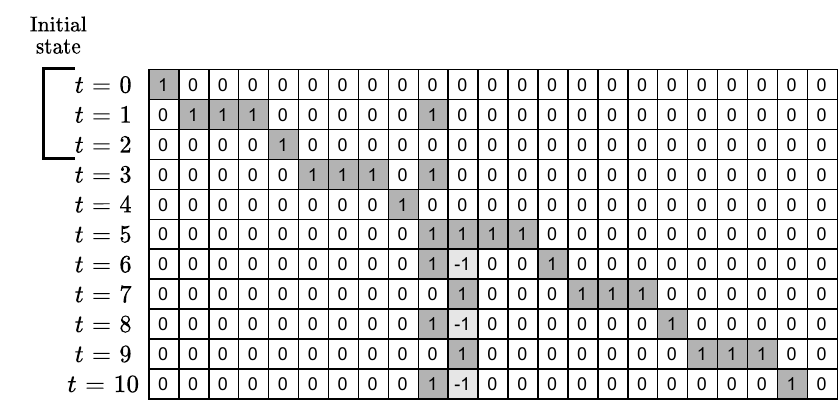}
\caption{\label{fig_nml_jump_evo2-2-1}
An interaction of normal and jumping solitons ($\al=1,\ \be=0$).}
\end{figure}
\begin{figure}
    \centering
    \includegraphics[width=10cm]
    {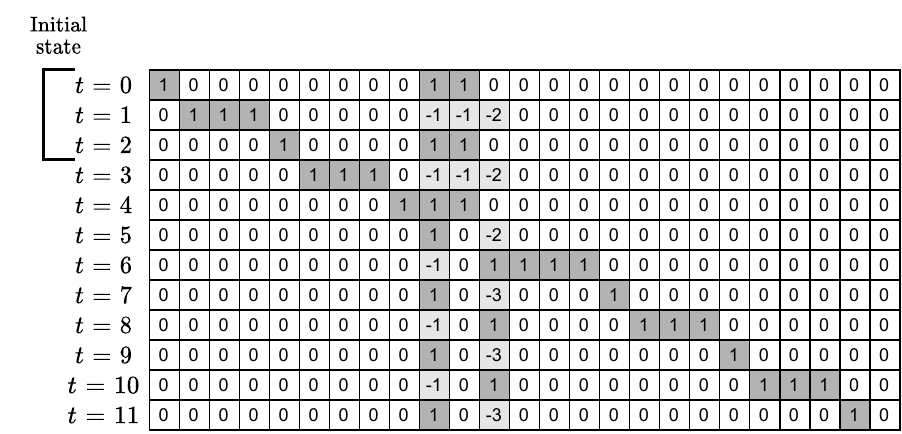}
\caption{\label{fig_nml_jump_evo2-2-2}
An interaction of normal and jumping solitons ($\al=1,\ \be=0$).}
\end{figure}
\begin{figure}
    \centering
    \includegraphics[width=10cm]
    {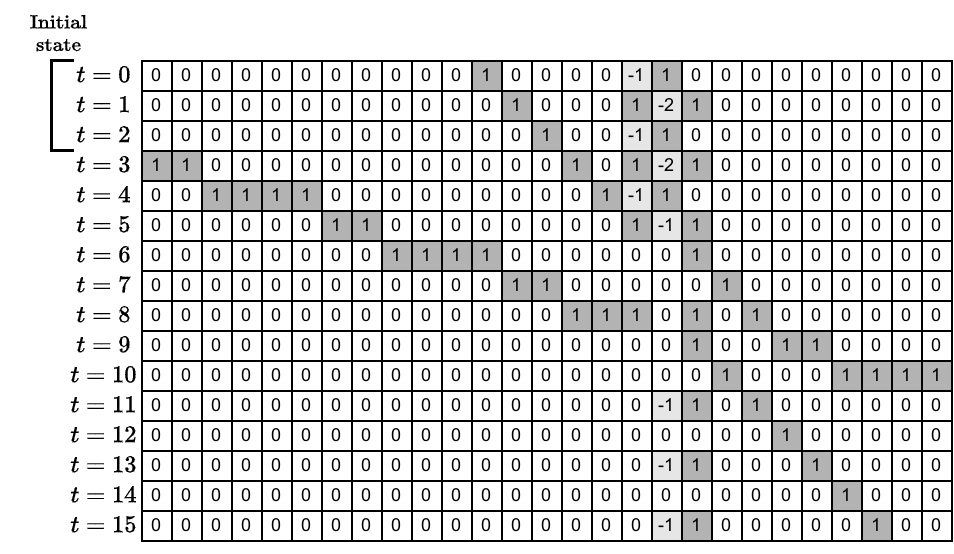}
\caption{\label{fig_nml_jump_evo2-2-3}
An interaction of normal and jumping solitons ($\al=1,\ \be=0$).}
\end{figure}

\end{subsection}

\end{section}

\begin{section}{The relationship to the BBS with $K$ kinds of balls}
\label{sec_rel}

In this section, we reveal the relationship between the delay BBS and an already known BBS.
As we discussed in previous sections, if initial states are normal, the delay BBS generates solitons which have $(\al+1)$-periodic speeds.
We can show that interactions of such normal solitons generated by the delay BBS for $\be=0$ are essentially equivalent to soliton interactions generated by the BBS with $\al+1$ kinds of balls (and box-capacity $1$) introduced in~\cite{Takahashi2}.

Figure \ref{fig_nml_evo4} (which is same as figure \ref{fig_nml_evo20}) shows normal solitons $(4,3,1)$ and $(1,3,2)$ generated by the delay BBS for $\al=2,\ \be=0$.
The time evolution of these solitons is described by rule (I) and (II)$_{0}$ introduced in subsection \ref{subsec_rule_nml_beta0}.
On the other hand, figure \ref{fig_kind3} shows solitons generated by the BBS with $3$ kinds of balls.
The time evolution rule for this system is to first move the numbered 1 ball according to the BBS rule, then the numbered 2 ball, and finally the numbered 3 ball from time $t$ to $t+1$.
\begin{figure}
    \centering
    \includegraphics[width=17cm]
    {fig4.pdf}
    \caption{\label{fig_nml_evo4}
    Solitons obtained by normal initial states ($\al=2,\ \be=0$).}
    \centering
    \includegraphics[width=17cm]
    {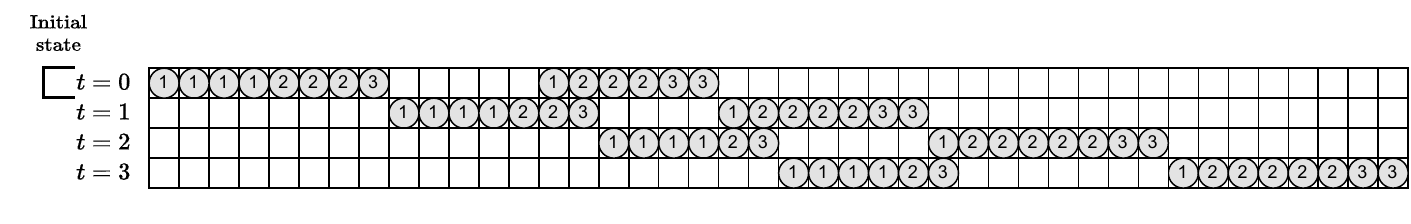}
    \caption{\label{fig_kind3}
    Solitons generated by the BBS with $3$ kinds of balls.}
\end{figure}

The above soliton obtained by these two types of BBSs are equivalent, despite the different representations of the rules.
The states of time 0, 1 and 2 in figure \ref{fig_nml_evo4} correspond to the numbered 1, 2 and 3 balls of time 0 in figure \ref{fig_kind3} respectively.
Similarly, the states of time 3, 4 and 5 in figure \ref{fig_nml_evo4} correspond to the numbered 1, 2 and 3 balls of time 1 in figure \ref{fig_kind3} respectively.

This correspondence can be easily extended to the general $\al$ case.
Namely, one cycle ($\al+1$ time periods) of the delay BBS for $\be=0$ corresponds to one time period of the BBS with $\al+1$ kinds of balls.

Now, we prove this correspondence by the bilinear formulation.
As shown in~\cite{Tokihiro2}, the bilinear equation of the BBS with $\al+1$ kinds of balls is
\begin{equation}
    F_{n+1}^{t+1+\al}+F_{n}^{t-1}
    =\max(F_{n}^{t+1+\al}+F_{n+1}^{t-1}-1,
    F_{n+1}^{t+\al}+F_{n}^{t})\,.
\end{equation}
On the other hand, according to (\ref{dlyudishlv_bl}), the bilinear equation of the delay BBS when $\be=0$ is
\begin{equation*}
    F_{n+1}^{t+1+\al}+F_{n}^{t-1}
    =\max(F_{n}^{t+1+\al}+F_{n+1}^{t-1}-1,
    F_{n}^{t+\al}+F_{n+1}^{t})\,.
\end{equation*}
Although these two bilinear equations differ at the last term in the right-hand sides, the above correspondence exists between them.
Now, we show these two bilinear equations are essentially equivalent in the case of the normal solitons.
In other words, the following theorem is proved.
\begin{thm}
    Let $\al\geq0$ and $a_{n}^{t},b_{n}^{t},c_{n}^{t}$ be as follows:
    \begin{eqnarray*}
        a_{n}^{t}=F_{n+1}^{t+\al}+F_{n}^{t}\,,\quad
        b_{n}^{t}=F_{n}^{t+\al}+F_{n+1}^{t}\,,\quad
        c_{n}^{t}=F_{n}^{t+1+\al}+F_{n+1}^{t-1}\,.
    \end{eqnarray*}
    If $U_{n}^{t} = -F_{n}^{t}+F_{n-1}^{t}+F_{n}^{t-1}-F_{n-1}^{t-1}$ is a normal soliton, then the relation
    \begin{equation*}
        a_{n}^{t}=b_{n}^{t}
    \end{equation*}
    or inequalities
    \begin{equation*}
        a_{n}^{t}<c_{n}^{t}\,,\quad
        b_{n}^{t}<c_{n}^{t}
    \end{equation*}
    hold true.
\end{thm}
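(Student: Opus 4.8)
The plan is to reduce the claimed dichotomy to a single flux inequality and then read that inequality off the rolling structure of normal solitons. First I would note that the two bilinear equations differ only in the last entry of the $\max$, so for a fixed $F$ they agree exactly when $\max(c_n^t-1,a_n^t)=\max(c_n^t-1,b_n^t)$; since for a normal soliton $U_n^t\in\{0,1\}$ and hence $F$ (and $a_n^t,b_n^t,c_n^t$) are integer valued, this holds whenever $a_n^t=b_n^t$, or whenever $a_n^t,b_n^t\le c_n^t-1$, i.e. $a_n^t<c_n^t$ and $b_n^t<c_n^t$. Thus the two stated alternatives are precisely a sufficient condition, and it suffices to show one of them always occurs. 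Using the relations of the Remark, $F_n^{t-1}-F_n^t=S_n^t$ and $F_n^t-F_{n-1}^t=\sum_{i>t}U_n^i$ with $S_m^s:=\sum_{j=-\infty}^{m}U_j^s$, I would compute $b_n^t-a_n^t=\sum_{i=t+1}^{t+\al}U_{n+1}^i\ge 0$ and $c_n^t-b_n^t=S_{n+1}^t-S_n^{t+\al+1}$. Hence $a_n^t\le b_n^t$ always, with equality exactly when box $n+1$ is empty at every intermediate time $t+1,\dots,t+\al$ (the first alternative); it then remains to prove that if box $n+1$ is occupied at some such time then $c_n^t-b_n^t=S_{n+1}^t-S_n^{t+\al+1}>0$, which gives $a_n^t\le b_n^t<c_n^t$.

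Next I would extract a strict surplus from the single queue responsible for the occupancy. Writing $L_s(j)=\head_s(j)-\tail_s(j)+1$ for the number of balls in the $j$-th queue at time $s$, property (B$'$), $\tail_i(j)=\head_{i-1}(j)+1$, shows that each position-indexed queue rolls strictly to the right, its occupied positions over successive times tile an interval, and its tail advances by $\Lambda_j=\sum_{k=t}^{t+\al}L_k(j)\ge\al+1$ over one period. If box $n+1$ is occupied at an intermediate time $i$ by queue $j^\ast$, then $\head_t(j^\ast)\le\head_{i-1}(j^\ast)=\tail_i(j^\ast)-1\le n$, so $j^\ast$ lies entirely $\le n$ at time $t$, while $\tail_{t+\al+1}(j^\ast)\ge\tail_{i+1}(j^\ast)=\head_i(j^\ast)+1\ge n+2$, so $j^\ast$ lies entirely $>n$ at time $t+\al+1$. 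Hence $j^\ast$ contributes $+L_t(j^\ast)\ge 1$ to $S_{n+1}^t-S_n^{t+\al+1}$, and the proof is finished once I show that every other queue contributes $\ge 0$.

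That last step is the main obstacle. For a freely moving soliton the $j$-th queue has $(\al+1)$-periodic length, so over one period it is a rigid right-translation by $\Lambda_j\ge 1$; its balls $\le n$ at time $t+\al+1$ then match its balls $\le n-\Lambda_j$ at time $t$, which is at most its balls $\le n+1$ at time $t$, giving a non-negative contribution queue-by-queue and hence $c_n^t-b_n^t\ge L_t(j^\ast)\ge 1$. During an interaction this periodicity fails: as the proof of Theorem~\ref{thm_interaction} shows, a position-indexed queue can gain balls over one period, and then its isolated contribution to $S_{n+1}^t-S_n^{t+\al+1}$ can be negative. The hard part is therefore not a single inequality but a global mass balance: the balls gained by any queue over one period are exactly those shed by the neighbouring queues (the finite-window form of the conservation laws in Corollary~\ref{cor_conservation}), and since every queue advances strictly rightward no net mass can enter the region $\{\,\cdot\le n\,\}$ from the left. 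Carrying out this bookkeeping shows $\sum_{j\ne j^\ast}(\text{contribution})\ge 0$, which together with the surplus $L_t(j^\ast)\ge 1$ yields $c_n^t-b_n^t\ge 1>0$ and completes the proof.
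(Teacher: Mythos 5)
Your reduction coincides with the paper's up to the midpoint: both compute $b_{n}^{t}-a_{n}^{t}=\sum_{i=t+1}^{t+\al}U_{n+1}^{i}\ge 0$ and $c_{n}^{t}-b_{n}^{t}=\sum_{j\le n+1}U_{j}^{t}-\sum_{j\le n}U_{j}^{t+\al+1}$, so that everything rests on showing the latter is strictly positive whenever box $n+1$ is occupied at some time $t+1,\dots,t+\al$. From there you diverge, and your argument has a genuine gap precisely at the sentence ``Carrying out this bookkeeping shows $\sum_{j\ne j^{\ast}}(\text{contribution})\ge 0$.'' That is the whole difficulty, you correctly identify it as such, and then you assert it rather than prove it. Your queue-by-queue decomposition and the surplus $L_t(j^{\ast})\ge 1$ from the occupying queue are correct, but, as you yourself note, an individual queue can gain balls over one period during an interaction, so its isolated contribution can be negative; the ``global mass balance'' that is supposed to rescue the sum is exactly the missing lemma. (It is fillable: queues $j>j^{\ast}$ lie beyond $\tail_{t+\al+1}(j^{\ast})\ge n+2$ at time $t+\al+1$ and so contribute $\ge 0$ trivially; the group of queues $j<j^{\ast}$ lies entirely at positions $\le\head_t(j^{\ast})\le n$ at time $t$, hence contributes its full mass to $\sum_{j\le n+1}U_j^t$, while its contribution to $\sum_{j\le n}U_j^{t+\al+1}$ is at most its total mass at time $t+\al+1$, which under rule (II) cannot exceed its mass at time $t$ since balls are only handed forward to $j^{\ast}$. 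None of this bookkeeping appears in your proposal.)

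The paper sidesteps the queue-by-queue pitfall altogether. It first establishes the aggregate monotonicity $\sum_{j\le n}U_{j}^{t+\al+1}\le\sum_{j\le n}U_{j}^{t}\le\sum_{j\le n+1}U_{j}^{t}$ (balls only move rightward, so the count on a left half-line cannot increase over one period), which already yields $b_{n}^{t}\le c_{n}^{t}$; it then assumes equality $\sum_{j\le n}U_{j}^{t+\al+1}=\sum_{j\le n+1}U_{j}^{t}$ together with $U_{n+1}^{t+p_1}=1$ and derives a contradiction from the structural consequence that equality forces $\head_{t}(k)\le n+1\Rightarrow\head_{t+p}(k)\le n$ for every $p$ and $k$, which is incompatible with a queue covering site $n+1$ at an intermediate time. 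In other words, the paper obtains strictness from an equality-case analysis rather than from an explicit per-queue surplus. Your route can be completed, but as written the decisive inequality is announced, not derived.
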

\begin{proof}
    Since the case of $\al=0$ is obvious, we assume $\al\geq1$.
    It is sufficient to prove the following statement:
    \begin{quote}
        If $a_{n}^{t}\neq b_{n}^{t}$, then $a_{n}^{t}<c_{n}^{t}$ and $b_{n}^{t}<c_{n}^{t}$.
    \end{quote}
    By using $U_{n}^{t} = -F_{n}^{t}+F_{n-1}^{t}+F_{n}^{t-1}-F_{n-1}^{t-1}$, the following relation is obtained:
    \begin{equation*}
        b_{n}^{t}-a_{n}^{t}
        =-F_{n+1}^{t+\al}+F_{n}^{t+\al}+F_{n+1}^{t}-F_{n}^{t}
        =U_{n+1}^{t+\al}+U_{n+1}^{t+\al-1}+\ldots+U_{n+1}^{t+1}\,.
    \end{equation*}
    Since $U_{n}^{t}$ is a normal soliton, $U_{n}^{t}$ is in $\{0,1\}$ for all $n,t$.
    Thus we obtain
    \begin{equation*}
        a_{n}^{t}\leq b_{n}^{t}
    \end{equation*}
    and
    \begin{equation*}
        a_{n}^{t}\neq b_{n}^{t}
        \Leftrightarrow (\exists p\in\{1,2,\ldots,\al\}
        \mathrm{\ such\ that\ }
        U_{n+1}^{t+p}=1)\,.
    \end{equation*}
    Therefore we need to prove the following statement:
    \begin{quote}
        If $(\exists p\in\{1,2,\ldots,\al\}
        \mathrm{\ such\ that\ }
        U_{n+1}^{t+p}=1)$, then $b_{n}^{t}<c_{n}^{t}$.
    \end{quote}
    Now, using the relation
    \begin{equation*}
        F_{n}^{t-1}-F_{n}^{t}=\sum_{j=-\infty}^{n}U_{j}^{t}\,,
    \end{equation*}
    we obtain
    \begin{equation*}
        b_{n}^{t}<c_{n}^{t}\Leftrightarrow
        \sum_{j=-\infty}^{n}U_{j}^{t+\al+1}<\sum_{j=-\infty}^{n+1}U_{j}^{t}\,.
    \end{equation*}
    Since $U_{n}^{t}$ is a normal soliton, we can easily show
    \begin{equation*}
        \sum_{j=-\infty}^{n}U_{j}^{t+\al+1}
        \leq\sum_{j=-\infty}^{n}U_{j}^{t}
        \leq\sum_{j=-\infty}^{n+1}U_{j}^{t}\,.
    \end{equation*}
    Thus
    \begin{equation*}
        b_{n}^{t}<c_{n}^{t}\Leftrightarrow
        \sum_{j=-\infty}^{n}U_{j}^{t+\al+1}\neq\sum_{j=-\infty}^{n+1}U_{j}^{t}\,.
    \end{equation*}
    In order to prove the statement we need, let us assume
    \begin{equation*}
        U_{n+1}^{t+p_1}=1\,,\quad p_1\in\{1,2,\ldots,\al\}
    \end{equation*}
    and
    \begin{equation}
        \label{assume1}
        \sum_{j=-\infty}^{n}U_{j}^{t+\al+1}=\sum_{j=-\infty}^{n+1}U_{j}^{t}\,,
    \end{equation}
    and show the contradiction.
    By (\ref{assume1}), we can show that
    \begin{equation}
        \label{property1}
        \fl\head_{t}(k)\leq n+1\Rightarrow
        \head_{t+p}(k)\leq n\quad \forall p\in\{1,2,\ldots,\al+1\}\ \forall k\in\{1,2,\ldots\}\,,
    \end{equation}
    as shown in figure \ref{fig_nml_form}.
    On the other hand, since $U_{n+1}^{t+p_1}=1$, there exists $k_1$ such that
    \begin{equation*}
        \head_{t}(k_1)<\tail_{t+p_1}(k_1)\leq n+1\leq\head_{t+p_1}(k_1) 
    \end{equation*}
    This is inconsistent because it contradicts (\ref{property1}).
    \begin{figure}
        \centering
        \includegraphics[width=15cm]
        {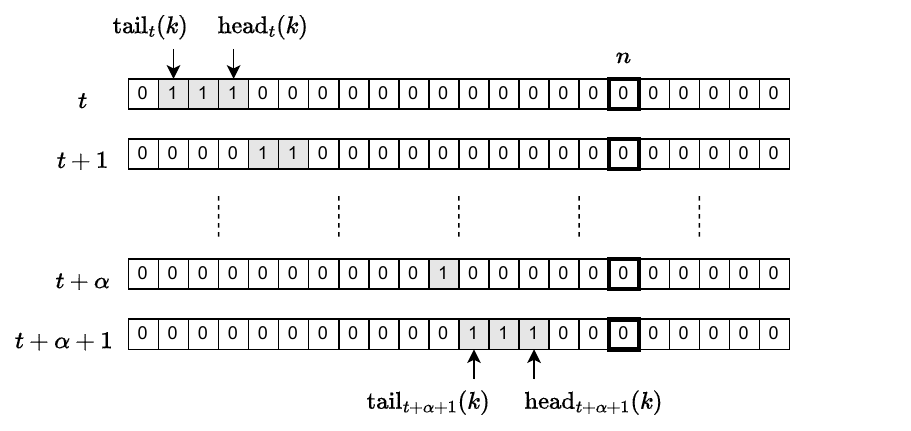}
        \caption{\label{fig_nml_form}
        A normal soliton satisfying (\ref{assume1}).}
    \end{figure}
\end{proof}

Note that several differences exist between the two systems.
When $\be\neq0$ or solitons are abnormal, soliton solutions of the delay BBS do not correspond to those of the BBS with $\al+1$ kinds of balls, such as figure \ref{fig_nml_evo22} and \ref{fig_non_evo}.

\end{section}

\begin{section}{Conclusions}
\label{sec_con}

We have constructed a delay analogue of the BBS by the \udisn\ of the delay discrete LV equation.
This delay BBS includes the delay parameter $\al$, and determines the state of time $t+\al+2$ by using the states of times $t,\ t+1,\ t+\al+1$.
As we mentioned in section \ref{sec_obs}, we need initial states of times $0$ to $\al+1$ for the time evolution.
This feature produces a variety of soliton patterns such as normal and abnormal solitons.
Normal solitons have $(\al+1)$-periodic speeds, which are considered to be effects of the delay.
Abnormal solitons show soliton patterns not found in known BBSs.
We constructed their $1$-soliton solutions in the bilinear formulation.

In section \ref{sec_rule}, the time evolution equation of the delay BBS was interpreted into an elementary rule described by boxes and balls in the case of the normal solitons.
In addition, interactions of normal and jumping solitons are considered.

As we discussed in section \ref{sec_rel}, solitons generated by the delay BBS for $\be=0$ and the BBS with $\al+1$ kinds of balls become exactly the same when initial states are normal.
We showed that the bilinear equations of both BBSs are essentially equivalent when solitons are normal.


In addition to the above, we consider that some important problems with the delay BBS remain.
First, multi-solitons obtained numerically in this paper have not yet been described by explicit soliton solutions.
We need to construct multi-soliton solutions, especially multi-abnormal soliton solutions.
In addition, we need to construct the solution of the triangle pattern as shown in figure \ref{fig_non_evo} (d).
The peculiarity of this pattern is that the sum $\sum_{n=-\infty}^\infty U_n^t$ (the total of the dependent variables across the spatial dimension) increases with the time evolution.
Second, it may be interesting to consider the delay BBS with a carrier whose capacity is finite, as a delay extension of the BBS with a finite-capacity carrier introduced in~\cite{Takahashi2}.
Third, in this paper, we have not given the mathematical definition to each type of abnormal soliton such as the jumping soliton and long-period soliton.
This problem should be investigated, for example, by advancing the considerations discussed in Remark \ref{rem_form}.

Finally, relationships between the delay BBS and the delay discrete Toda or delay discrete KdV equations~\cite{Tsunematsu,Nakata1,Nakata2} should be clarified.
Because some studies revealed that the BBS is related to the discrete Toda molecule and discrete KdV equations~\cite{Nagai,Mada}.
We address these problems in future studies.

\end{section}

\ack
The authors would like to thank Prof.~Daisuke Takahashi for his helpful comments on this work.
This work was partially supported by JSPS KAKENHI Grant Numbers 22K03441, 22H01136 and Waseda University Grants for Special Research Projects.

\section*{References}

\end{document}